\documentclass[runningheads]{llncs}

\usepackage[T1]{fontenc}
\usepackage{amsmath,amssymb}
\usepackage{microtype}
\usepackage{url}

\newcommand{\FMMP}{\textsc{FMMP}}
\newcommand{\XP}{\textsc{XP}}
\newcommand{\CountFMMP}{\#\textsc{FMMP}}
\newcommand{\rank}{\operatorname{rank}}
\spnewtheorem{observation}[theorem]{Observation}{\bfseries}{\itshape}

\begin{document}

\title{Structural Complexity of Matching-Match: Dense and Sparse Graphs}
\titlerunning{Structural Complexity of Matching-Match}

\author{Ilie Dumitru\inst{1} \and Adrian Micl\u au\c s\inst{1} \and Alexandru Popa\inst{1,2}}
\authorrunning{Dumitru, Micl\u au\c s, and Popa}

\institute{Department of Computer Science, University of Bucharest, Str. Academiei 14, Bucharest, 010014, Romania
\and
National Institute for Research and Development in Informatics, Bulevardul Mare\c{s}al Alexandru Averescu 8--10, Bucharest, 011555, Romania}

\maketitle

\begin{abstract}
The Matching-Match puzzle asks whether the vertices of a fixed graph can be colored so that the multiset of color pairs induced by its edges is exactly a prescribed multiset. We study how the complexity of this realization problem depends on the host graph. On the dense side, we give a polynomial-time algorithm for complete $k$-partite graphs for every fixed number $k$ of parts, with arbitrary precoloring and an arbitrary number of colors. We prove a sharp complement-degree threshold: the problem is polynomial-time solvable when $\Delta(\overline G)\le1$, but NP-complete on completely uncolored graphs already when $\Delta(\overline G)=2$. This yields a dichotomy for uniform complete multipartite graphs, and connected diameter two already suffices for NP-completeness. We also prove W[1]-hardness on cographs parameterized by the number of colors.

On the sparse side, completely uncolored paths and cycles admit a linear-time characterization by Euler trails and circuits, while counting feasible colorings is $\#P$-complete on both classes. Counting is nevertheless polynomial-time solvable on stars and complete graphs, even with arbitrary precoloring. A decomposition-transfer theorem yields NP-completeness already at tree-depth two. A separate path-decomposition reduction gives a maximum-degree threshold between one and two for completely uncolored disconnected host graphs with unrestrictedly many colors. Components with at most two edges are tractable, while a disjoint union of $P_4$'s is NP-complete. Finally, precoloring restores tractability in several cases: star forests are polynomial when every center is precolored, and two broad precoloring regimes on length-two spiders are polynomial even when the number of colors is unbounded.

\keywords{Graph coloring \and Graph decomposition \and NP-hardness \and Parameterized complexity}
\end{abstract}

\section{Introduction}

\subsection*{Motivation}

Complex networks are commonly described through small recurring patterns of interactions, known as \emph{network motifs}~\cite{MiloEtAl2002}. Such patterns have been used as local structural signatures in biological, technological, ecological, and information-processing networks. When network vertices represent entities with different types, functions, or roles, this viewpoint can be refined through \emph{colored motifs}, where colors retain information carried by the vertices~\cite{AdamiEtAl2011}.

Most motif-based approaches start from an already labeled network and ask which local patterns occur in it. This suggests an inverse realization question: given an unlabeled network topology and a prescribed multiset of colored pairwise interactions, can the vertices be labeled so that exactly those interactions are realized? We study this question through the \emph{Matching-Match puzzle} introduced by Iburi and Uehara~\cite{IburiUehara2024}. The puzzle interpretation provides a concrete origin for the problem, while the interaction-profile viewpoint connects it to structural questions about labeled networks.

\subsection*{Informal Problem Definition}

An instance consists of a graph $G=(V,E)$, which we call the \emph{host graph}, together with exactly $|E|$ colored sticks. Each stick has one color at each endpoint and hence specifies an unordered pair of colors. Some vertices of $G$ may already be precolored. We ask whether the sticks can be assigned bijectively to the graph edges and the remaining vertices can be colored so that all stick endpoints meeting at the same graph vertex have one common color and every precolored vertex keeps its prescribed color.

Equivalently, a final vertex coloring of $G$ induces an unordered color pair on every graph edge. The instance is feasible exactly when the multiset of these induced pairs equals the prescribed multiset of stick types. We call this decision problem \emph{Feasible Matching-Match} and abbreviate it by \FMMP{}. Thus, every use of \FMMP{} below refers to this feasibility problem; the formal notation is fixed in Section~\ref{sec:prelim}.

\subsection*{Previous and Related Work}

Iburi and Uehara introduced Matching-Match and initiated its complexity study~\cite{IburiUehara2024}. They proved NP-completeness on paths, cycles, and spiders, and gave polynomial-time algorithms for complete graphs, stars, and completely uncolored spiders whose legs have length at most two. Their work also identified the number of colors, precoloring on short spiders, and bounded leg length as natural directions for further study.

Dumitru, Micl\u au\c s, and Popa prove that two colors already suffice for NP-completeness on general graphs and develop parameterized results on spiders~\cite{DumitruMiclausPopaSpiders}. They also study the optimization variant \textsc{MaxMMP}, establishing approximation guarantees and exact algorithms for fixed numbers of colors on trees and cographs~\cite{DumitruMiclausPopaMax}. The present paper focuses instead on structural graph classes, sharp tractability boundaries, and counting complexity.

The interaction-profile viewpoint should not be confused with the classical \textsc{Graph Motif} problem, where a vertex-colored graph and a multiset of colors are given and one asks for a connected vertex set realizing that color multiset~\cite{FellowsEtAl2011}. In Matching-Match the vertex coloring itself is part of the solution, and the prescription concerns the color pairs induced by all graph edges simultaneously.

\subsection*{Our Results}

We establish several structural boundaries for \FMMP{}.

\paragraph{Dense graph classes.}
For every fixed number of parts, \FMMP{} is polynomial-time solvable on complete multipartite graphs, even with arbitrary precoloring and an unrestricted number of colors. We obtain a sharp threshold in the maximum degree of the complement and a dichotomy for complete multipartite graphs with equal part size. We also show that NP-completeness already occurs on connected completely uncolored graphs of diameter two.

\paragraph{Parameterized complexity.}
On cographs, \FMMP{} is W[1]-hard when parameterized by the number of colors. This complements the known polynomial-time algorithms for every fixed number of colors.

\paragraph{Sparse graphs and counting.}
Completely uncolored paths and cycles are solvable in linear time. Their counting versions are nevertheless $\#P$-complete. In contrast, feasible colorings can be counted in polynomial time on stars and complete graphs, even with arbitrary precoloring.

\paragraph{Disconnected graphs and short spiders.}
For completely uncolored disconnected host graphs, we obtain hardness from graph-decomposition problems, including hardness at tree-depth two, at maximum degree two, and when every component is a three-edge path. We also identify polynomial cases for star forests with precolored centers and for two broad precoloring patterns on spiders whose legs have length at most two.

\section{Preliminaries}\label{sec:prelim}

Throughout the paper, graphs are finite, simple, and undirected. For a graph $G=(V,E)$, write $n=|V|$ and $m=|E|$, and let $\Delta(G)$ denote its maximum degree. The complement of $G$ is denoted by $\overline G$. The distance between two vertices is the length, in edges, of a shortest path between them, and $\operatorname{diam}(G)$ denotes the diameter of a connected graph. The path on $\ell$ vertices is denoted by $P_\ell$.

The \emph{disjoint union} of graphs is denoted by $\dot\cup$. The \emph{join} $G_1\vee G_2$ is obtained from their disjoint union by adding all edges between $V(G_1)$ and $V(G_2)$. A complete $k$-partite graph with nonempty parts of sizes $n_1,\ldots,n_k$ is denoted by $K_{n_1,\ldots,n_k}$. A \emph{cograph} is a graph obtained from single vertices by repeated disjoint union and join; equivalently, it contains no induced $P_4$. A \emph{block graph} is a graph in which every maximal biconnected subgraph is a clique, and a \emph{linear forest} is a disjoint union of paths.

A \emph{star} is a tree with one center adjacent to every other vertex, and a \emph{star forest} is a disjoint union of stars. A \emph{spider} is a tree with at most one vertex of degree greater than two. In a non-path spider this unique high-degree vertex is its \emph{body}; each maximal path from the body to a leaf is a \emph{leg}, and the length of a leg is its number of edges. In the spider instances considered below the body is part of the input description, so the same terminology also covers degenerate short cases.

For completeness, a tree decomposition of $G$ is a tree whose nodes carry vertex sets, called bags, such that every graph vertex occurs in a connected set of bags and every graph edge has both endpoints in some bag. Its width is one less than the maximum bag size, and the treewidth $\operatorname{tw}(G)$ is the minimum width. The tree-depth $\operatorname{td}(G)$ is the minimum height of a rooted forest whose closure contains $G$, where the closure joins each vertex to all of its ancestors. We use these parameters only to state structural consequences of our reductions.

Let $C=\{1,\ldots,c\}$ be the color set. A Matching-Match instance $\mathcal I$ consists of a host graph $G=(V,E)$, exactly $m$ sticks, and a partial coloring $\mathcal C_0:V\to\{0,1,\ldots,c\}$, where $\mathcal C_0(v)=0$ means that $v$ is initially uncolored. Each stick is identified with its unordered type $\{i,j\}$. For $1\le i\le j\le c$, let $s_{ij}$ be the number of sticks of type $\{i,j\}$.

A final coloring $\phi:V\to C$ \emph{extends} $\mathcal C_0$ if $\phi(v)=\mathcal C_0(v)$ whenever $\mathcal C_0(v)\ne0$. For such a coloring, let $e_{ij}(\phi)$ be the number of graph edges whose endpoint colors form the unordered pair $\{i,j\}$. Since the number of sticks equals the number of graph edges, the placement of individual sticks is irrelevant once these multiplicities agree.

\begin{observation}\label{obs:hist}
An instance is feasible if and only if there exists a final coloring $\phi$ extending $\mathcal C_0$ such that $e_{ij}(\phi)=s_{ij}$ for every $1\le i\le j\le c$.
\end{observation}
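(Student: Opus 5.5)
The plan is to prove both directions directly from the informal definition of feasibility given in the introduction. A solution there consists of two things: a bijection $\sigma$ from the $m$ sticks to the edges of $G$, and a coloring $\phi$ of the vertices extending $\mathcal C_0$. The requirement is that for every edge $uv$, the stick $\sigma^{-1}(uv)$ has type $\{\phi(u),\phi(v)\}$. The endpoint-agreement condition says that all stick ends meeting at a vertex $v$ share one color, and this common color is exactly $\phi(v)$. The restriction $\phi(v)=\mathcal C_0(v)$ for precolored $v$ is part of both formulations, so it carries over unchanged in each direction.

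For the forward direction, assume such $\sigma$ and $\phi$ exist. The edges of $G$ are partitioned according to the unordered pair they receive under $\phi$. The bijection $\sigma$ sends the sticks of type $\{i,j\}$ onto exactly the edges with induced pair $\{i,j\}$. Hence $s_{ij}=e_{ij}(\phi)$ for every $1\le i\le j\le c$.

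For the reverse direction, assume $\phi$ extends $\mathcal C_0$ and $e_{ij}(\phi)=s_{ij}$ for all $i\le j$. For each pair $\{i,j\}$, let $E_{ij}$ be the set of edges with induced pair $\{i,j\}$, and let $S_{ij}$ be the set of sticks of type $\{i,j\}$. These sets have equal sizes, so we fix an arbitrary bijection $S_{ij}\to E_{ij}$. The sets $E_{ij}$ partition $E$, because every edge gets exactly one unordered pair with $i\le j$ (loops are excluded since $G$ is simple, though $i=j$ is still allowed). Likewise, the sets $S_{ij}$ partition the sticks. So the union of the per-type bijections is a global bijection $\sigma$.

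It remains to orient each stick. A stick of type $\{i,j\}$ placed on an edge $uv$ with $\{\phi(u),\phi(v)\}=\{i,j\}$ is turned so that its $\phi(u)$-colored end lies at $u$ and its $\phi(v)$-colored end lies at $v$. Then every stick end at a vertex $v$ has color $\phi(v)$, and the precoloring is respected because $\phi$ extends $\mathcal C_0$. This gives a valid solution.

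No step is a genuine obstacle. The only point needing a little care is the orientation of each stick when $i\ne j$, together with the bookkeeping that the type classes partition both the sticks and the edges. That partition property is what makes the per-type bijections combine into one global bijection.
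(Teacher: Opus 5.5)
Your proof is correct and follows the paper's own approach. The paper states this observation without a formal proof and justifies it in one line: since the numbers of sticks and edges agree, stick placement is irrelevant once the type multiplicities $e_{ij}(\phi)=s_{ij}$ match. Your per-type bijections $S_{ij}\to E_{ij}$, with each stick oriented to agree with $\phi$, are exactly that remark written out in full.
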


An instance is \emph{completely uncolored} if $\mathcal C_0(v)=0$ for every $v\in V$. The counting problem \CountFMMP{} asks for the number of feasible final vertex colorings; permutations of physically identical sticks are not counted separately.

\section{Dense and Multipartite Graphs}\label{sec:dense}

\subsection{Complete $k$-Partite Graphs for Fixed $k$}\label{sec:multipartite-fixed-k}

We first show that fixing the number of parts yields polynomial-time solvability even when both the number of colors and the precoloring are unrestricted.

\begin{theorem}\label{thm:fixed-k}
For every fixed $k$, \FMMP{} on complete $k$-partite host graphs $K_{n_1,\ldots,n_k}$ can be solved in time $O\!\left(c^2k+(n+1)^{k^2}(c^2k+ck^2+k^3)\right)$.
\end{theorem}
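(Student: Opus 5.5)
The plan is to reduce everything to the per-part color counts and then exploit the fact that the induced pair multiplicities form a Gram-type matrix of dimension $k$. Write $a_{p,i}$ for the number of vertices of part $p$ receiving color $i$. Let $a_i=(a_{1,i},\ldots,a_{k,i})\in\mathbb N^k$ and $B=J-I$ be the $k\times k$ all-ones matrix minus the identity. Since a complete $k$-partite graph has all edges between distinct parts and none inside a part, $e_{ij}(\phi)=a_i^{\top}Ba_j$ for $i\ne j$ and $e_{ii}(\phi)=\tfrac12 a_i^{\top}Ba_i$. Thus, by Observation~\ref{obs:hist}, feasibility depends only on the $k\times c$ count matrix $A=(a_{p,i})$. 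Precoloring is also expressed in these counts: if $f_{p,i}$ vertices of part $p$ are precolored $i$, then $A$ is realizable by an extension of $\mathcal C_0$ if and only if $a_{p,i}\ge f_{p,i}$ for all $p,i$ and $\sum_i a_{p,i}=n_p$ for all $p$. The free vertices of a part are interchangeable, so no further constraint arises. The problem therefore becomes: find nonnegative integer $A$ satisfying these box and column-sum constraints with $A^{\top}BA=S'$. Here $S'$ is the symmetric matrix with $S'_{ij}=s_{ij}$ off the diagonal and $S'_{ii}=2s_{ii}$, and building it costs $O(c^2)$, which accounts for the leading $c^2k$-type term.

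Next, observe that $B$ is invertible (eigenvalues $k-1$ and $-1$). So once we know vectors $a_{i_1},\ldots,a_{i_r}$ for colors whose images $Ba_{i_t}$ span the column space of $A$, every other $a_j$ is pinned down by the $r$ linear equations $a_{i_t}^{\top}Ba_j=S'_{i_tj}$ together with membership in that span. The algorithm enumerates at most $k$ "basis" vectors, each in $\{0,\ldots,n\}^k$, giving the $(n+1)^{k^2}$ factor. For each guess it proceeds as follows.
\begin{itemize}
\item Invert or row-reduce the $k\times k$ system in $O(k^3)$.
\item Solve for every remaining color in $O(ck^2)$ total.
\item Reject if some $a_j$ is not a nonnegative integer vector, violates $a_{p,j}\ge f_{p,j}$, or if a column sum $\sum_i a_{p,i}\ne n_p$.
\item Finally verify all $c^2$ entries of $A^{\top}BA=S'$ in $O(c^2k)$.
\end{itemize}
This matches the per-guess cost $c^2k+ck^2+k^3$. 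Correctness follows because a feasible $A$ is found by the guess of its own basis vectors, and every accepted $A$ passes the full verification.

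The main obstacle is deciding \emph{which} colors serve as the basis without paying a factor $c^k$ for choosing them. To do this, I would first run Gaussian elimination on $S'$ itself. This costs $O(c^2k)$ once, since its rank is at most $k$, and yields a canonical set of pivot colors $i_1,\ldots,i_r$ whose rows of $S'$ are independent; the guessed vectors are then attached to these pivots. The delicate point is that $B$ is indefinite, so $\operatorname{rank}(A^{\top}BA)$ may be smaller than $\operatorname{rank}A$. In that case the linear system determines the other $a_j$ only modulo a kernel direction. My first attempt at handling this is as follows. Augment the guessed data with up to $k-r$ additional guessed vectors spanning the remainder of $\operatorname{col}(A)$; the total number of guessed vectors is still at most $k$, so the bound is unchanged. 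Then show that each non-pivot color's vector is the unique solution of the completed $k\times k$ system. Given the identities $a_i^{\top}Ba_j=S'_{ij}$, the column-sum constraints $\sum_i a_i=(n_1,\ldots,n_k)$ supply extra linear relations, and I would use them to fix the residual kernel component. Checking that this completion argument is airtight is where I expect most of the work to lie.
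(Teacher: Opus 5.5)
Your reduction to the count vectors $a_i$, the identity $A^{\top}BA=S'$, the choice of pivot colors by elimination on $S'$, and the cost accounting all agree with the paper. There is, however, a gap at exactly the point you call delicate, and your proposed workaround cannot close it. Enumerating only the $r=\operatorname{rank}S'$ pivot vectors determines the remaining $a_j$ only if those vectors span $W=\operatorname{span}\{a_1,\ldots,a_c\}$, i.e.\ only if $\operatorname{rank}S'=\dim W$.

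Suppose instead that $B$ restricted to $W$ had a nonzero radical vector $z$. Then the shift $a_j\mapsto a_j+z$, $a_{j'}\mapsto a_{j'}-z$ preserves every equation $a_i^{\top}Ba_j=S'_{ij}$ and every column sum. So the column-sum relations cannot ``fix the residual kernel component''. Extra guessed vectors do not help either. An arbitrary guessed $w\in W$ has no known $B$-inner product with the unknown $a_j$, so it yields no equation. Only color vectors yield equations, and choosing which colors to use costs a factor of order $c^{k-r}$, which the target bound does not allow. As written, your claim that a feasible $A$ ``is found by the guess of its own basis vectors'' is therefore unsupported.

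The missing idea is that the degenerate case never occurs. $B$ is nondegenerate on $W$ because $W$ contains the strictly positive vector $\mathbf n=\sum_i a_i=(n_1,\ldots,n_k)^{\top}$ and $k\ge2$. To see this, suppose $0\ne z\in W$ were $B$-orthogonal to $W$, and put $\sigma=\sum_p z_p$.
\begin{itemize}
\item We get $0=z^{\top}Bz=\sigma^2-\|z\|_2^2$, so after a sign change $\sigma=\|z\|_2>0$.
\item Then $Bz=\sigma\mathbf 1-z$ is coordinatewise nonnegative, and it is nonzero since otherwise $\sigma=k\sigma$.
\item Hence $z^{\top}B\mathbf n=(Bz)^{\top}\mathbf n>0$, contradicting $\mathbf n\in W$.
\end{itemize}

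Consequently $\operatorname{rank}S'=\dim W$. Nonsingularity of $S'[I,I]=Y^{\top}BY$ forces the pivot vectors $Y$ to be independent, so they form a basis of $W$. Every other vector is then forced as $a_j=Y\,S'[I,I]^{-1}S'[I,j]$. This nondegeneracy lemma is the heart of the paper's proof, and with it your augmentation step is unnecessary.
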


\begin{proof}
The case $k=1$ is trivial because the host graph is edgeless, so assume $k\ge2$. Let the parts be $V_1,\ldots,V_k$. For each color $a$, define $x_a=(x_{1a},\ldots,x_{ka})^{\mathsf T}$, where $x_{ra}$ is the number of vertices in $V_r$ that receive color $a$. Let $J_k$ denote the $k\times k$ all-ones matrix and $I_k$ the $k\times k$ identity matrix, and put $B=J_k-I_k$.

For distinct colors $a,b$, the number of host edges of type $\{a,b\}$ is $x_a^{\mathsf T}Bx_b$: every product $x_{ra}x_{sb}$ with $r\ne s$ counts exactly the edges having color $a$ in part $r$ and color $b$ in part $s$. For one color $a$, the expression $x_a^{\mathsf T}Bx_a$ counts each monochromatic edge twice, so the number of $\{a,a\}$ edges is $\frac12x_a^{\mathsf T}Bx_a$. Define the symmetric matrix $Q\in\mathbb Z^{c\times c}$ by $Q_{ab}=s_{ab}$ for $a\ne b$ and $Q_{aa}=2s_{aa}$. If $X=[x_1\ \cdots\ x_c]$, Observation~\ref{obs:hist} shows that feasibility is equivalent to $Q=X^{\mathsf T}BX$, together with $x_{ra}\ge0$, the part-size equations $\sum_a x_{ra}=n_r$, and the lower bounds $x_{ra}\ge p_{ra}$, where $p_{ra}$ is the number of vertices in $V_r$ precolored $a$.

Let $W=\operatorname{span}\{x_1,\ldots,x_c\}$ and let $\mathbf n=(n_1,\ldots,n_k)^{\mathsf T}=\sum_a x_a$. We claim that the bilinear form induced by $B$ is nondegenerate on $W$. Suppose that $0\ne z\in W$ lies in its radical. Since $\mathbf n\in W$, both $z^{\mathsf T}Bz=0$ and $z^{\mathsf T}B\mathbf n=0$. Write $\sigma=\sum_i z_i$. Because $B=J_k-I_k$, we have $0=z^{\mathsf T}Bz=\sigma^2-\sum_i z_i^2$, hence $|\sigma|=\|z\|_2$ and in particular $\sigma\ne0$. If $\sigma>0$, then $|z_i|\le\sigma$ for every $i$, so every coordinate of $Bz$ equals $\sigma-z_i\ge0$. Moreover $Bz\ne0$: otherwise every $z_i=\sigma$, implying $\sigma=k\sigma$, impossible for $k\ge2$ and $\sigma\ne0$. Since every coordinate of $\mathbf n$ is positive, $(Bz)^{\mathsf T}\mathbf n>0$, contradicting $z^{\mathsf T}B\mathbf n=0$. The case $\sigma<0$ is symmetric after replacing $z$ by $-z$.

Thus the restriction of the form to $W$ is nondegenerate, and therefore $r:=\rank Q=\dim W\le k$. A symmetric matrix over a field of characteristic different from two has a nonsingular principal minor of order equal to its rank, so in polynomial time we can choose an index set $I=\{i_1,\ldots,i_r\}$ such that $Q[I,I]$ is nonsingular.

We enumerate the $r$ vectors $x_{i_1},\ldots,x_{i_r}$. Each has $k$ coordinates, with coordinate $p$ in $\{0,\ldots,n_p\}$, so the number of assignments is at most $\prod_{j=1}^r\prod_{p=1}^k(n_p+1)\le(n+1)^{kr}\le(n+1)^{k^2}$. We discard an assignment if a basis vector violates a precolor lower bound or if $x_{i_p}^{\mathsf T}Bx_{i_q}\ne Q_{i_pi_q}$ for some $p,q$.

For a surviving assignment, let $Y=[x_{i_1}\cdots x_{i_r}]$. In any feasible completion the columns of $Y$ form a basis of $W$. Hence every remaining color vector has the form $x_j=Y\lambda_j$. Taking its inner products with the basis gives $Q[I,j]=Y^{\mathsf T}Bx_j=Q[I,I]\lambda_j$, and therefore $\lambda_j=Q[I,I]^{-1}Q[I,j]$. Thus every $x_j$ is uniquely determined. We reconstruct all of them by exact rational arithmetic and check that every coordinate is integral and nonnegative, that all precolor lower bounds hold, that $\sum_jx_j=\mathbf n$, and finally that $Q=X^{\mathsf T}BX$.

If these checks succeed, then in each part $V_r$ we color exactly $x_{ra}$ vertices with color $a$. The lower bounds ensure that the precolored vertices can be included, while the remaining vertices of a part are interchangeable. The identity $Q=X^{\mathsf T}BX$ then gives exactly the required edge-type multiplicities, so Observation~\ref{obs:hist} yields a feasible solution. Conversely, every feasible solution supplies one of the enumerated basis assignments and passes all checks.

Using symmetric Gaussian elimination and stopping as soon as $k+1$ independent pivots are found, we either reject because $\rank Q>k$ or obtain $\rank Q$ and a nonsingular principal submatrix in $O(c^2k)$ time. There are at most $(n+1)^{k^2}$ basis assignments. For each assignment, checking the basis inner products takes $O(k^3)$ time, reconstructing all $c$ color-count vectors takes $O(ck^2)$ time, and verifying $Q=X^{\mathsf T}BX$ takes $O(c^2k)$ time. Hence the total running time is $O\!\left(c^2k+(n+1)^{k^2}(c^2k+ck^2+k^3)\right)$. \qed
\end{proof}

\begin{corollary}\label{cor:multipartite-xp}
\FMMP{} on complete multipartite graphs belongs to \XP{} when parameterized by the number $k$ of parts.
\end{corollary}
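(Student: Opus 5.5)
The corollary should follow directly from Theorem~\ref{thm:fixed-k}; the only work is to check that its running time has the form $f(k)\cdot|\mathcal I|^{g(k)}$ that \XP{} requires. I would first recall that a parameterized problem lies in \XP{} if, for every fixed value of the parameter, it can be solved in time $|\mathcal I|^{g(k)}$ for some computable $g$, where $|\mathcal I|$ is the input size. It then remains to bound the expression $O\!\left(c^2k+(n+1)^{k^2}(c^2k+ck^2+k^3)\right)$ from Theorem~\ref{thm:fixed-k} by such a quantity.

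\textbf{Step 1: bounding the factors by the input size.} Both $n$ and $c$ are at most polynomial in $|\mathcal I|$. Each vertex contributes to the description of the host graph or to the precoloring map $\mathcal C_0$, so $n\le|\mathcal I|$. One may assume without loss of generality that every color appears on some stick or on some precolored vertex, since unused colors can be discarded; hence $c\le 2m+n$. Alternatively, $c$ is simply part of the input encoding. Also $k\le n$, so the factor $c^2k+ck^2+k^3$ is polynomial in $|\mathcal I|$ with degree independent of $k$.

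\textbf{Step 2: the total bound.} The dominant term is $(n+1)^{k^2}$. Combined with Step~1, the total running time is at most $|\mathcal I|^{k^2+O(1)}$, which is of the form $|\mathcal I|^{g(k)}$ with $g(k)=k^2+O(1)$ computable. This places the problem in \XP{}.

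\textbf{Step 3: uniformity and input representation.} A single algorithm must handle all $k$, so I would note that the algorithm in the proof of Theorem~\ref{thm:fixed-k} is uniform in $k$: it reads $k$ and the part sizes $n_1,\dots,n_k$ from the input and proceeds. The only subtlety, and the step most worth stating explicitly, is the representation of the input. If the complete multipartite graph is given succinctly by the part sizes rather than by an explicit edge list, then $n$ need not be polynomial in the encoding length. Even so, the precoloring $\mathcal C_0$ assigns a value to each vertex, so the input lists all $n$ vertices and $n$ remains bounded by $|\mathcal I|$. The same holds for $c$, since the input lists the $m$ sticks or their multiplicities $s_{ij}$ over colors in $C$.
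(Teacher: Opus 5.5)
Your proposal is correct and matches the paper's route: the paper states this corollary as an immediate consequence of Theorem~\ref{thm:fixed-k}, whose algorithm is uniform in $k$ and runs in time $|\mathcal I|^{k^2+O(1)}$. Your remarks on discarding unused colors and on why $n\le|\mathcal I|$ (the precoloring $\mathcal C_0$ lists every vertex) are sound, and they make explicit what the paper leaves implicit.
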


\subsection{A Sharp Complement-Degree Threshold}

We next measure how far the host graph is from being complete. The positive side consists exactly of complete graphs with a matching of edges deleted.

\begin{theorem}\label{thm:comp-one}
If $\Delta(\overline G)\le1$, then \FMMP{} is solvable for arbitrary numbers of colors and arbitrary precoloring in time $O(m+c^2+(n+c)^{1+o(1)})$.
\end{theorem}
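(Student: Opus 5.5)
Write $G=K_n-M$, where $M$ is the matching of non-edges; it is fixed and read off in $O(m+n)$ time. For a final coloring $\phi$, let $y_a=|\phi^{-1}(a)|$ and let $t_{ab}$ be the number of pairs of $M$ whose endpoint colors form $\{a,b\}$. In the complete graph the number of $\{a,b\}$ edges is $y_ay_b$ for $a\ne b$ and $\binom{y_a}{2}$ for $a=b$, and deleting $M$ removes exactly the $t_{ab}$ pairs. By Observation~\ref{obs:hist}, feasibility is therefore equivalent to the existence of $\phi$ extending $\mathcal C_0$ with
\[
s_{ab}=y_ay_b-t_{ab}\ (a\ne b),\qquad s_{aa}=\tbinom{y_a}{2}-t_{aa}.
\]
The plan has three parts: recover the class sizes $y_a$ from the sticks, which then fixes every $t_{ab}$; check the arithmetic consequences; and decide by a flow computation whether the pairs of $M$ and the unmatched vertices can be colored to realize exactly these $y_a$ and $t_{ab}$.

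\emph{Recovering $y_a$.} Since $0\le t_{aa}\le\lfloor y_a/2\rfloor$, we get $y_a(y_a-2)/2\le s_aa\le\binom{y_a}{2}$. The function $\binom{y}{2}$ grows by $y$ per step while the slack $y/2$ grows by at most one, so this interval pins $y_a$ down uniquely once $y_a\ge3$. The only ambiguity is among $y_a\in\{0,1,2\}$ when $s_{aa}\in\{0,1\}$.
\begin{itemize}
\item To resolve it, use the color degree $d_a=2s_{aa}+\sum_{b\ne a}s_{ab}$. This must equal $y_a(n-1)-u_a$, where $u_a\le y_a$ is the number of $a$-colored vertices covered by $M$.
\item If $d_a=0$, then $y_a=0$ is forced unless $n\le 2$, which we treat by brute force. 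Otherwise $d_a\ge n-2$, and for $n\ge 5$ the three candidates give disjoint ranges $[0,0]$, $[n-2,n-1]$, $[2n-4,2n-2]$ for $d_a$.
\end{itemize}
Thus every $y_a$ is determined in $O(1)$ time per color after computing all $d_a$ in $O(c^2)$ time; instances with $n\le4$ are handled by brute force. Colors with $d_a=0$ might still host isolated vertices of $G$. These exist only when $n\le2$, or as vertices of degree $n-2\ge1$, so for $n\ge3$ there are none.

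\emph{Arithmetic checks.} We reject unless $\sum_a y_a=n$. We then set $t_{ab}:=y_ay_b-s_{ab}$ and $t_{aa}:=\binom{y_a}{2}-s_{aa}$ and require all of them to be nonnegative with $\sum_{a\le b}t_{ab}=|M|$. Only colors with $y_a>0$, at most $n$ of them, carry nonzero $t$, so after the $O(c^2)$ scan we have at most $|M|$ positive $t_{ab}$.

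\emph{Realization by flow.} We must assign to each pair of $M$ one type $\{a,b\}$ with $t_{ab}$ pairs per type, respecting precolors, and then color the unmatched vertices so that each color $a$ is used exactly $y_a$ times.
\begin{itemize}
\item Pairs with both ends precolored are forced; subtract them from the $t_{ab}$.
\item A pair with exactly one end precolored $a$ must take a type containing $a$, and its free end then gets the other color. A fully free pair takes any remaining type.
\item Unmatched precolored vertices are forced. The unmatched free vertices absorb the remaining demand $y_a-(\text{matched }a\text{-vertices})-(\text{unmatched precolored }a)$, which must be nonnegative and sum correctly. Because unmatched free vertices are interchangeable, this last condition is a simple count.
\item Matched vertices used per color are fixed by the $t_{ab}$, namely $u_a=2t_{aa}+\sum_{b\ne a}t_{ab}$. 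Hence the only real constraint is the transportation problem: one-precolored pairs, grouped by their precolor $a$, must be distributed over the types $\{a,b\}$ with remaining capacity, and free pairs fill whatever is left.
\end{itemize}
This is a bipartite flow with $O(n+c)$ nodes and arcs, namely precolor groups, types with positive $t$, and a free-pair node. An almost-linear max-flow algorithm solves it in $(n+c)^{1+o(1)}$ time, giving the claimed $O(m+c^2+(n+c)^{1+o(1)})$. The main obstacle I expect is the step recovering $y_a$: the small-class ambiguities and the boundary cases with small $n$ must be handled carefully, and I would first verify the case analysis via $d_a$ above before building the flow.
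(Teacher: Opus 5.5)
Your proof is correct. It shares the paper's overall architecture, but the key step, recovering the color class sizes, is done differently.

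\textbf{The paper's route.} It splits each class into $x_a$ full-degree vertices and the matched vertices, and uses only the endpoint count $D_a=(n-2)t_a+x_a$. It reads off $x_a$ as the residue of $D_a$ modulo $n-2$. That needs $u=n-2q<n-2$, i.e.\ $q\ge2$, so the complete case $q=0$ and the case $q=1$ are treated separately. For $q=1$ the paper enumerates the $O(c^2)$ color pairs on the single missing edge and only asserts that this is polynomial.

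\textbf{Your route.} You recover the total class size directly from the monochromatic stick count, using the pairwise disjoint windows $\binom{y}{2}-\lfloor y/2\rfloor\le s_{aa}\le\binom{y}{2}$ for $y\ge2$. You use $d_a$ only to separate the candidates $y\in\{0,1,2\}$. This works uniformly for every matching size and stays within the stated time bound without the $q=1$ enumeration. The price is a separate treatment of tiny hosts. The paper's route, in turn, directly gives the split into full-degree and matched vertices; you obtain that split afterwards as $u_a=2t_{aa}+\sum_{b\ne a}t_{ab}$.

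From that point the two proofs coincide: the same forced missing-pair multiplicities (the paper's $r_{ab}$, your $t_{ab}$), the same subtraction of doubly precolored pairs, and the same transportation flow for singly precolored pairs. Your free-pair node is harmless but unnecessary, since free pairs absorb whatever types remain.

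\textbf{Minor points.}
\begin{itemize}
\item $s_{aa}=1$ already forces $y_a=2$, so the only genuine ambiguity is $s_{aa}=0$.
\item The $d_a$ windows are already disjoint for $n\ge4$, so you do not need $n\ge5$.
\item Your brute force for small $n$ should range only over colors that occur on sticks (at most $2m\le12$). Naively enumerating all $c^n$ colorings would exceed the claimed $O(m+c^2+(n+c)^{1+o(1)})$ bound.
\end{itemize}
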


\begin{proof}
Write $G=K_n-M$, where $M$ is a matching of size $q$. Thus $u=n-2q$ vertices have degree $n-1$ and the $2q$ endpoints of $M$ have degree $n-2$. For each color $a$, let $D_a=2s_{aa}+\sum_{b\ne a}s_{ab}$ be the number of stick endpoints of color $a$. Let $x_a$ be the number of degree-$(n-1)$ vertices colored $a$, let $y_a$ be the number of degree-$(n-2)$ vertices colored $a$, and put $t_a=x_a+y_a$. Every feasible coloring satisfies $D_a=(n-1)x_a+(n-2)y_a=(n-2)t_a+x_a$.

Assume first that $q\ge2$. Then $u=n-2q<n-2$, so $0\le x_a\le u<n-2$. Consequently $x_a$ is the unique integer in $\{0,\ldots,u\}$ satisfying $x_a\equiv D_a\pmod{n-2}$. Once $x_a$ is known, $t_a=(D_a-x_a)/(n-2)$ and $y_a=t_a-x_a$ are forced. We reject if these values are not nonnegative integers or if their sums are inconsistent with $u$ and $2q$. If $q=0$, the host is complete and the color multiplicities are determined directly from the endpoint counts. If $q=1$, we enumerate the $O(c^2)$ possible color pairs on the unique missing edge and then determine all remaining color counts; this case is also polynomial.

For $a\ne b$, a complete graph on a vertex multiset containing $t_a$ vertices of color $a$ and $t_b$ vertices of color $b$ would contain $t_at_b$ pairs of type $\{a,b\}$, while it would contain $\binom{t_a}{2}$ pairs of type $\{a,a\}$. Since the only nonedges of $G$ are the edges of $M$, the missing matching must contain exactly $r_{ab}=t_at_b-s_{ab}$ nonedges of type $\{a,b\}$ for $a\ne b$, and $r_{aa}=\binom{t_a}{2}-s_{aa}$ nonedges of type $\{a,a\}$. We reject if any $r_{ab}<0$ or if $\sum_{a\le b}r_{ab}\ne q$.

It remains to realize the missing-edge types while respecting endpoint precolors. We first handle every missing edge whose two endpoints are precolored. If $f_{ab}$ such edges have color pair $\{a,b\}$, then necessarily $f_{ab}\le r_{ab}$; otherwise we reject. Subtract these forced copies from $r_{ab}$. For each color $a$, let $h_a$ be the number of remaining missing edges with exactly one precolored endpoint, whose fixed color is $a$.

The remaining choice is compressed to a flow on colors and pair types. Create a source, one node for each type $\{a,b\}$ with positive remaining multiplicity $r_{ab}$, one node for each color, and a sink. Add an arc from the source to the type node $\{a,b\}$ of capacity $r_{ab}$, and arcs from this type node to color nodes $a$ and $b$, each of capacity $r_{ab}$; for a loop type $\{a,a\}$ there is only one such arc. Finally, add an arc from color node $a$ to the sink of capacity $h_a$. Let $H=\sum_a h_a$. A flow of value $H$ selects, for every half-precolored missing edge, one unused pair type containing its fixed color. The selected types can then be assigned arbitrarily to the corresponding half-precolored edges of each color, and every unselected type can be placed on an edge whose two endpoints are uncolored. Conversely, every feasible coloring induces such a flow. Thus this flow is feasible exactly when the missing matching can be colored consistently.

The full-degree vertices are independent of this flow. If $p_a$ of them are precolored $a$, they can realize the forced counts $x_a$ exactly when $p_a\le x_a$ for every $a$ and $\sum_a(x_a-p_a)$ equals the number of uncolored full-degree vertices.

After the $O(c^2)$ type multiplicities $r_{ab}$ are computed, at most $q\le n/2$ of them are positive. Hence the flow network has $O(n+c)$ arcs and vertices, with integral capacities bounded by $n$. The deterministic exact max-flow algorithm of van den Brand et al.~\cite{BrandEtAl2023Flow} runs in $M^{1+o(1)}$ time on a network with $M$ arcs and polynomially bounded integral capacities, so the flow step takes $(n+c)^{1+o(1)}$ time. Computing the input histograms and all remaining checks takes $O(m+c^2)$ time, giving the stated bound. \qed
\end{proof}

The next result shows that increasing the complement degree by one already makes the problem hard.

\begin{theorem}\label{thm:comp-two}
\FMMP{} is NP-complete on completely uncolored graphs whose complement has maximum degree $2$. Hardness already holds when the complement is a disjoint union of triangles.
\end{theorem}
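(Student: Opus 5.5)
The plan is a reduction from \textsc{Edge Partition into Triangles}: decide whether the edge set of a graph $H$ can be partitioned into triangles. Holyer showed this is NP-complete. Membership of \FMMP{} in NP is immediate from Observation~\ref{obs:hist}, since a final coloring is a polynomial certificate. A completely uncolored graph whose complement is a disjoint union of $q$ triangles is the complete $q$-partite graph $K_{3,\ldots,3}$ on $n=3q$ vertices. So hardness on this class gives both claims of the statement, and the reduction will only produce hosts of this form.

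First I would analyse this host in the style of Theorem~\ref{thm:comp-one}. Every vertex has degree $n-3$. With $D_a$ the number of stick endpoints of color $a$, any feasible coloring has $D_a=(n-3)t_a$, where $t_a$ is the number of vertices colored $a$. Hence the $t_a$ are forced, as long as $n>3$, which we can ensure by padding with dummy triangles if necessary. As before, the nonedges must then carry exactly $r_{ab}=t_at_b-s_{ab}$ pairs of type $\{a,b\}$ for $a\neq b$, and $r_{aa}=\binom{t_a}{2}-s_{aa}$ pairs of type $\{a,a\}$. Since the nonedges are $q$ vertex-disjoint triangles, feasibility is equivalent to the following: the color multiset with multiplicities $t_a$ can be split into $q$ triples such that the multiset of pairs inside the triples is exactly $(r_{ab})$.

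Next, given $H$, I would reject trivially unless every degree is even and $3\mid |E(H)|$, since both conditions are necessary for a triangle partition. Otherwise I take one color per vertex of $H$ and set $t_v=\deg_H(v)/2$, so that $\sum_v t_v=|E(H)|=3q$. I choose the sticks so that the residual nonedge profile is exactly $E(H)$:
\[
s_{uv}=t_ut_v-[uv\in E(H)], \qquad s_{vv}=\binom{t_v}{2}.
\]
These values are nonnegative, because $uv\in E(H)$ forces $t_u,t_v\ge1$. Their total is $\binom{3q}{2}-3q$, which equals the number of edges of the host, and the construction is polynomial. I must check that the endpoint counts reproduce $D_v=(n-3)t_v$. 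This follows by direct summation:
\[
2\binom{t_v}{2}+\sum_{u\ne v}t_vt_u-\deg_H(v)=t_v(3q-1)-2t_v .
\]

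For correctness, suppose the instance is feasible. Then $r_{vv}=0$, so every nonedge triple carries three distinct colors. Every pair inside a triple is an edge of $H$, so each triple is a triangle of $H$, and each edge of $H$ appears exactly once, which gives a triangle partition. Conversely, a triangle partition of $H$ uses each vertex $v$ in exactly $\deg_H(v)/2=t_v$ triangles. Coloring the $q$ complement-triangles by these triangles therefore yields color counts $t_v$ and nonedge profile $E(H)$, and hence the required edge histogram. The step I expect to need the most care is the forcing argument, that $D_a$ determines $t_a$ and that the edge histogram is then equivalent to the nonedge histogram. It must be stated carefully, including the degenerate small cases.
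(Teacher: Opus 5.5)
Your proposal is correct and follows essentially the same route as the paper: a reduction from Holyer's \textsc{Triangle Decomposition} to the host $K_{3,\ldots,3}$ with one color per source vertex, $t_v=d_H(v)/2$, stick counts $s_{uv}=t_ut_v-\mathbf{1}_{uv\in E(H)}$ and $s_{vv}=\binom{t_v}{2}$, color counts forced by endpoint counting against the $(n-3)$-regular host, and triangles read off the nonedge triples. The only cosmetic difference is that you avoid the degenerate $n=3$ case by padding $H$ with disjoint triangles, whereas the paper decides that constant-size case directly (there both sides are trivially YES-instances).
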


\begin{proof}
Membership in NP follows by guessing the final coloring and checking its edge-type histogram. We reduce from \textsc{Triangle Decomposition}, which asks whether the edge set of a given simple graph can be partitioned into edge-disjoint triangles and is NP-complete~\cite{Holyer1981}. Let $H=(V,E)$ be the source graph and let $m=|E|$. A triangle decomposition requires $3\mid m$ and every $d_H(v)$ to be even; if either condition fails, output a fixed NO-instance. Put $t_v=d_H(v)/2$. Since $\sum_vd_H(v)=2m$, we have $\sum_vt_v=m$.

Construct the host graph $G=K_{3,3,\ldots,3}$ with $m/3$ parts. Thus $G$ has $m$ vertices, every vertex has degree $m-3$, and the nonedges of $G$ are exactly the three pairs inside each part. Introduce one color, also denoted $v$, for every source vertex $v\in V(H)$. For distinct source vertices $u,v$, create $s_{uv}=t_ut_v-\mathbf 1_{uv\in E(H)}$ sticks of type $\{u,v\}$, and create $s_{vv}=\binom{t_v}{2}$ sticks of type $\{v,v\}$. These numbers are nonnegative. Their total is $\binom{m}{2}-m=|E(G)|$, because they are obtained from all pairs in a multiset with $t_v$ copies of each color $v$ by deleting one pair for every source edge.

The number $D_v$ of stick endpoints of color $v$ is
$D_v=2\binom{t_v}{2}+\sum_{u\ne v}(t_ut_v-\mathbf1_{uv\in E(H)})=t_v(m-3)$.
For $m>3$, every host vertex has degree $m-3$, so any feasible coloring must use color $v$ on exactly $D_v/(m-3)=t_v$ vertices. The constant-size case $m=3$ can be decided directly.

Now compare the complete graph on these colored vertices with the actual host graph. For $u\ne v$ there are $t_ut_v$ vertex pairs of colors $u,v$, but only $s_{uv}$ host edges of this type; hence exactly one $uv$ pair is a nonedge precisely when $uv\in E(H)$. For one color $v$, all $\binom{t_v}{2}$ same-color pairs are host edges, so no nonedge has equal endpoint colors. Since every host part consists of three mutually nonadjacent vertices, its three colors are distinct, say $a,b,c$, and the three missing pairs $ab,bc,ca$ are all edges of $H$. Thus each part determines a source triangle. Every source edge appears as a missing pair exactly once, so these triangles form a triangle decomposition of $H$.

Conversely, given a triangle decomposition of $H$, color one host part by the three vertices of each source triangle. Each source vertex $v$ appears in exactly $d_H(v)/2=t_v$ triangles, so the required color multiplicities are correct. A pair of differently colored host vertices is missing exactly when the corresponding source edge belongs to the triangle assigned to that part; therefore the host-edge histogram is precisely the prescribed stick histogram. \qed
\end{proof}

\begin{corollary}[Exact complement-degree threshold]\label{cor:comp}
For unrestricted numbers of colors, \FMMP{} is polynomial-time solvable when $\Delta(\overline G)\le1$, whereas it is NP-complete already when $\Delta(\overline G)=2$.
\end{corollary}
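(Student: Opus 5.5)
The corollary follows by combining Theorem~\ref{thm:comp-one} and Theorem~\ref{thm:comp-two}; the only real work is to check that the two regimes together cover exactly the stated dichotomy.

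For the positive side, take any instance whose host graph satisfies $\Delta(\overline G)\le1$. Then $\overline G$ is a matching, so $G=K_n-M$ for a matching $M$, which is precisely the setting of Theorem~\ref{thm:comp-one}. That theorem gives an algorithm running in time $O(m+c^2+(n+c)^{1+o(1)})$ for arbitrary precoloring and an arbitrary number $c$ of colors. This bound is polynomial in the input size: the input lists the host graph and the stick histogram, and the $c$ colors are part of the input. So polynomial-time solvability holds with no restriction on $c$.

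For the negative side, Theorem~\ref{thm:comp-two} already shows NP-completeness on completely uncolored hosts whose complement is a disjoint union of triangles. One point needs care. The statement says $\Delta(\overline G)=2$ exactly, whereas a complement made of triangles could in principle be empty. In the reduction, however, the host is $K_{3,3,\ldots,3}$ with $m/3\ge1$ parts whenever $m\ge3$. Each part contributes a triangle to $\overline G$, so $\Delta(\overline G)=2$ exactly. The degenerate source instances (with $m<3$, or failing the divisibility and parity checks) are mapped to a fixed NO-instance. I would pick this fixed instance so that its host also has complement degree exactly two, for example $K_{3}$ complemented, i.e.\ three isolated vertices, paired with any mismatched stick multiset. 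This keeps the reduction inside the class.

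Membership in NP holds in general, since one can guess the coloring and check the edge histogram (Observation~\ref{obs:hist}). The main and only subtlety is therefore the boundary bookkeeping above: making sure every output instance of the reduction really has complement maximum degree $2$ and not smaller. Everything else is a direct citation of the two theorems.
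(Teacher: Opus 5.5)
Your proposal is correct and matches the paper, which states this corollary without a separate proof as the direct conjunction of Theorem~\ref{thm:comp-one} and Theorem~\ref{thm:comp-two}. One slip in your extra bookkeeping concerns the fixed NO-instance you chose. Three isolated vertices have no edges, so the only admissible stick multiset is empty, and that instance is trivially feasible; it is in fact the right image for the source case $m=0$, which is a YES-instance. For a fixed NO-instance with $\Delta(\overline G)=2$, take instead $K_{3,3}$ with eight sticks $\{1,2\}$ and one stick $\{1,1\}$: color $1$ then has $10$ stick endpoints, which is not a multiple of $3$ even though the host is $3$-regular.
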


\subsection{Uniform Complete Multipartite Graphs}

A complete multipartite graph is \emph{uniform} if all parts have the same size. The complement-degree result settles part sizes one and two; larger fixed part sizes are hard.

\begin{theorem}\label{thm:uniform}
Fix $r\ge1$ and let the host graph be $K_{r,r,\ldots,r}$ with an arbitrary number of parts. Then \FMMP{} is polynomial-time solvable for $r\le2$ and NP-complete for every fixed $r\ge3$. The hardness holds without precolored vertices.
\end{theorem}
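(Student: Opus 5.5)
For $r\le2$ the result is an immediate consequence of Theorem~\ref{thm:comp-one}. The complement of $K_{1,\ldots,1}$ is edgeless, and the complement of $K_{2,\ldots,2}$ is a perfect matching. In both cases $\Delta(\overline G)\le1$, so Theorem~\ref{thm:comp-one} gives a polynomial-time algorithm, even with arbitrary precoloring and arbitrarily many colors. For $r\ge3$, membership in NP is the usual one: guess the final coloring and compare its edge-type histogram with the stick histogram.

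For hardness with fixed $r\ge3$, I will generalize the proof of Theorem~\ref{thm:comp-two}, whose case $r=3$ is exactly the uniform case. The reduction is from \textsc{$K_r$-Decomposition}: decide whether the edges of a graph $H$ can be partitioned into copies of $K_r$. Holyer~\cite{Holyer1981} proved this NP-complete for every fixed $r\ge3$. Let $m=|E(H)|$. First reject trivially, by outputting a fixed NO-instance, unless $\binom r2\mid m$ and $(r-1)\mid d_H(v)$ for all $v$. Otherwise put $t_v=d_H(v)/(r-1)$ and $N=\sum_v t_v=2m/(r-1)=r\cdot m/\binom r2$. The host is $G=K_{r,\ldots,r}$ with $p=m/\binom r2$ parts, so it has $N$ vertices, each of degree $N-r$. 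Use one color per vertex of $H$, and set
$s_{uv}=t_ut_v-\mathbf 1_{uv\in E(H)}$ for $u\ne v$ and $s_{vv}=\binom{t_v}2$.
The total is $\binom N2-m=\binom N2-p\binom r2=|E(G)|$, as required. The endpoint count of color $v$ is then $D_v=t_v(N-1)-d_H(v)=t_v(N-r)$. Whenever $N>r$, every host vertex has degree $N-r>0$, so every feasible coloring uses color $v$ exactly $t_v$ times. Instances with $N\le r$ have constant size and are decided directly.

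The core step is the counting argument for the nonedges. Compare the complete graph on the colored vertex multiset with $G$. The nonedges of $G$ must account for exactly one pair of type $\{u,v\}$ for each $uv\in E(H)$ and no monochromatic pair. Hence each part, which is an independent $r$-set, receives $r$ distinct colors, and all $\binom r2$ pairs among them are edges of $H$. So each part spans a copy of $K_r$ in $H$. Each edge of $H$ is missing exactly once, so these cliques partition $E(H)$.

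Conversely, given a $K_r$-decomposition, color each part with the vertex set of one clique. Each vertex $v$ lies in $d_H(v)/(r-1)=t_v$ cliques, so the color multiplicities are right, and the missing pairs are exactly $E(H)$. The resulting histogram therefore matches the sticks.

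The main obstacle is the citation. I need Holyer's NP-completeness of edge partition into $K_r$ for each fixed $r\ge3$ in a form that survives the divisibility preconditions. I expect this to hold, since Holyer's result covers $K_r$-decomposition for all $r\ge3$. The rest is the arithmetic above, which mirrors Theorem~\ref{thm:comp-two}.
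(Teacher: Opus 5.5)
Your proposal is correct and follows essentially the same route as the paper. Both use the same reduction from \textsc{$K_r$-Decomposition} with $t_v=d_H(v)/(r-1)$, $s_{uv}=t_ut_v-\mathbf 1_{uv\in E(H)}$ and $s_{vv}=\binom{t_v}{2}$; the regularity argument forcing exactly $t_v$ vertices of each color; and the nonedge-counting argument showing that each part spans a $K_r$ of $H$. The differences are minor: for $r=1$ the paper cites the Iburi--Uehara algorithm for complete graphs rather than the $q=0$ case of Theorem~\ref{thm:comp-one}, and you explicitly handle the degenerate case $N\le r$ (edgeless host, where regularity forces nothing), which the paper's proof leaves implicit.
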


\begin{proof}
For $r=1$ the host graph is complete and the result follows from the polynomial algorithm of Iburi and Uehara~\cite{IburiUehara2024}. For $r=2$, the complement is a perfect matching, so Theorem~\ref{thm:comp-one} applies.

Fix $r\ge3$. We reduce from \textsc{$K_r$-Decomposition}, which asks whether the edge set of a given graph can be partitioned into copies of $K_r$ and is NP-complete for every fixed $r\ge3$~\cite{Holyer1981}. Let $H=(V,E)$ be the source graph and $m=|E|$. In every $K_r$-decomposition, each vertex degree is divisible by $r-1$ and $m$ is divisible by $\binom r2$; if these necessary conditions fail, output a fixed NO-instance. Put $t_v=d_H(v)/(r-1)$ and $N=\sum_vt_v=2m/(r-1)$. The host graph consists of $N/r=m/\binom r2$ parts of size $r$, so it has $N$ vertices and every host vertex has degree $N-r$.

Introduce one color for every source vertex. For $u\ne v$, create $s_{uv}=t_ut_v-\mathbf1_{uv\in E(H)}$ sticks, and create $s_{vv}=\binom{t_v}{2}$ sticks. As in the triangle reduction, these are all pair multiplicities of a multiset with $t_v$ copies of color $v$, except that one pair is removed for every source edge. Therefore the number of sticks equals $|E(K_{r,r,\ldots,r})|$.

For every color $v$, its number of stick endpoints is $D_v=2\binom{t_v}{2}+\sum_{u\ne v}(t_ut_v-\mathbf1_{uv\in E(H)})=t_v(N-1)-d_H(v)=t_v(N-r)$. Since the host is $(N-r)$-regular, every feasible coloring uses color $v$ on exactly $t_v$ host vertices.

Consequently, for each source edge $uv$ exactly one pair of host vertices colored $u,v$ must be a nonedge, while no nonedge may have equal endpoint colors. The nonedges of the host are precisely the pairs lying in the same size-$r$ part. Hence each part receives $r$ distinct source colors, and every pair among these colors is an edge of $H$. Each part therefore identifies a copy of $K_r$ in $H$. Since every source edge occurs as a missing pair exactly once, the parts yield a $K_r$-decomposition.

Conversely, from a $K_r$-decomposition of $H$, color one host part by the $r$ source vertices of each decomposition clique. Vertex $v$ appears in exactly $d_H(v)/(r-1)=t_v$ cliques, and the same pair-count argument shows that the resulting host-edge histogram equals the prescribed stick histogram. \qed
\end{proof}

\subsection{Hardness at Diameter Two}

The preceding hardness examples are highly disconnected in the complement. We now show that even connected host graphs of the smallest nontrivial diameter can be hard.

\begin{theorem}\label{thm:diam2}
\FMMP{} is NP-complete on completely uncolored cographs that are connected and have diameter $2$. Moreover, the hard host graphs are block graphs with a universal vertex, treewidth $3$, and clique number $4$.
\end{theorem}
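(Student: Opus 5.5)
The plan is a reduction from \textsc{Triangle Decomposition}~\cite{Holyer1981}, as in Theorem~\ref{thm:comp-two}. The host graph will be $G=K_1\vee(\dot\bigcup_{i=1}^{m/3}K_3)$: a universal vertex $w$ joined to $m/3$ disjoint triangles. This graph is a cograph (a join of a vertex with a disjoint union of cliques) and has diameter $2$. Its blocks are the $K_4$'s formed by $w$ and each triangle, all sharing the cut vertex $w$, so it is a block graph with clique number $4$ and treewidth $3$. Hence the structural claims of Theorem~\ref{thm:diam2} are immediate, and the work is in choosing the sticks so that feasibility encodes a triangle decomposition of the source graph $H$.

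\emph{Normalization.} Given $H$ with $m$ edges, reject as in Theorem~\ref{thm:comp-two} unless $3\mid m$ and all degrees are even, and put $t_v=d_H(v)/2$, so that $\sum_v t_v=m$. I would first replace $H$ by two disjoint copies of itself. This does not change the answer, since triangles cannot cross components, and it guarantees $d_H(v)\le m/2$ in the new instance. That inequality will be needed to pin down the color of $w$.

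\emph{Sticks.} Use one color per source vertex and one fresh color $z$. For every source edge $uv$, create one stick $\{u,v\}$. For every source vertex $v$, create $t_v$ sticks $\{z,v\}$. There are $m$ sticks of each kind, and $|E(G)|=m+3\cdot m/3=2m$, so the total matches.

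\emph{Forward direction.} Given a triangle decomposition, color $w$ with $z$ and color each host triangle by one source triangle. Vertex $v$ then appears $t_v$ times among the triangle vertices, so the edges at $w$ realize exactly the $\{z,v\}$ sticks, and the triangle edges realize exactly the source edges.

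\emph{Backward direction.} The main step, and the one I expect to be the obstacle, is to show that every feasible coloring gives $w$ the color $z$.
\begin{itemize}
\item Suppose $w$ received a color $a\ne z$. All $m$ edges at $w$ contain $a$, so at least $m$ sticks must contain $a$.
\item The sticks containing $a$ are only its $t_a$ sticks $\{z,a\}$ and its $d_H(a)=2t_a$ source-edge sticks, that is, $3t_a$ sticks in total.
\item By the doubling step, $3t_a=\tfrac32 d_H(a)\le \tfrac34 m<m$, which is a contradiction.
\end{itemize}
Once $w$ has color $z$, the argument finishes as follows.
\begin{itemize}
\item There is no $\{z,z\}$ stick, and all $m$ sticks containing $z$ are then used at $w$. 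Hence no triangle vertex is colored $z$.
\item The edges at $w$ then force color $v$ to appear on exactly $t_v$ triangle vertices.
\item The triangle edges carry exactly the source-edge sticks, each once. In particular no triangle contains a repeated color, since there are no $\{v,v\}$ sticks.
\item Each host triangle therefore spans a triangle of $H$, and together these triangles use every source edge exactly once, giving a triangle decomposition.
\end{itemize}

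Membership in NP follows as before, by guessing the final coloring and checking its edge-type histogram. If the doubling trick turns out to be awkward, I would instead compare the endpoint count of the color placed at $w$ directly with the degree $m$ of $w$, but the bound $3t_a<m$ seems the cleanest route.
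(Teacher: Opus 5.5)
Your reduction is correct. It differs from the paper's only in how the color of the universal vertex is forced.

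\textbf{The paper's route.} The paper leaves the source graph $H$ untouched and instead enlarges the host to $K_1\vee(tK_3\dot\cup mK_1)$. It adds $m$ pendant leaves and $m$ private colors $z_1,\ldots,z_m$, each occurring in a single stick $\{S,z_i\}$. This raises both the degree of the universal vertex and the endpoint count of $S$ to $2m$. The trivial bound $d_H(v)\le m$ then already gives $\tfrac32 d_H(v)<2m$.

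\textbf{Your route.} You keep the leaner host $K_1\vee tK_3$, in which every block is a $K_4$, and use a single extra color. You pay for this with the doubling step, which yields $d(v)\le m/2$ and hence $3t_a\le\tfrac34 m<m$.

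\textbf{Why doubling is needed.} Without it, $3t_a=m$ can occur. Take a friendship graph with center $a$ and triangles $au_iv_i$. There $w$ really can be colored $a$: give host triangle $i$ the colors $z,u_i,v_i$, and every stick is used exactly once. So the direct comparison with $\deg(w)=m$ that you mention as a fallback would not, by itself, pin $w$ to $z$.

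\textbf{Diameter.} State explicitly that doubling also guarantees $t=m/3\ge 2$. This is what gives your host diameter exactly two, since for $t=1$ it is $K_4$, which has diameter one. In the paper the pendant leaves ensure diameter two even when $t=1$.
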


\begin{proof}
Membership in NP is immediate. We reduce from \textsc{Triangle Decomposition}, which asks whether the edge set of a given simple graph can be partitioned into triangles and is NP-complete~\cite{Holyer1981}. Let $H=(V,E)$ be an instance with $m=|E|$. If $3\nmid m$ or some source vertex has odd degree, output a fixed NO-instance. Put $t=m/3$ and construct the host graph $G=K_1\vee(tK_3\dot\cup mK_1)$. Let $r$ be the universal vertex. The graph has $m$ internal triangle edges, $m$ edges from $r$ to triangle vertices, and $m$ edges from $r$ to leaves, hence $3m$ edges in total. It is a cograph and a block graph, is connected with diameter two, and has clique number four and treewidth three.

Introduce a special color $S$, one color $c_v$ for every source vertex $v$, and fresh colors $z_1,\ldots,z_m$. Create one stick $\{c_u,c_v\}$ for every source edge $uv$, create $d_H(v)/2$ sticks of type $\{S,c_v\}$ for every $v$, and create one stick $\{S,z_i\}$ for every $i$. The first family has $m$ sticks, the second has $\sum_vd_H(v)/2=m$ sticks, and the third has $m$, so the total is $3m=|E(G)|$.

Color $S$ occurs on exactly $2m$ stick endpoints. A source color $c_v$ occurs on $d_H(v)+d_H(v)/2\le3m/2<2m$ endpoints, and each $z_i$ occurs once. Since the universal vertex $r$ has degree $2m$, its final color must occur on at least $2m$ stick endpoints; hence $r$ is forced to color $S$. There are exactly $2m$ sticks containing $S$, so every one of them is used on an edge incident with $r$.

After these sticks are used, only the $m$ source-edge sticks remain for the $m$ internal edges of the host triangles. Therefore no triangle vertex can receive a color $z_i$, because its two internal incident edges would then require sticks containing $z_i$, and none remains. Thus every triangle vertex receives some source color $c_v$. It follows that the $m$ universal-to-triangle edges consume all $m$ sticks of type $\{S,c_v\}$, and the $m$ universal-to-leaf edges consume the $m$ sticks $\{S,z_i\}$.

Inside a host triangle no two vertices can share a source color, because no source-edge stick has type $\{c_v,c_v\}$. If the three colors are $c_a,c_b,c_c$, its internal edges require precisely the three sticks $\{c_a,c_b\}$, $\{c_b,c_c\}$, and $\{c_c,c_a\}$, so $ab,bc,ca\in E(H)$. Hence every host triangle determines a source triangle. Every source-edge stick is used exactly once, so these source triangles partition $E(H)$.

Conversely, given a triangle decomposition of $H$, color one host triangle by each source triangle. A source vertex $v$ occurs in exactly $d_H(v)/2$ decomposition triangles, so the $\{S,c_v\}$ sticks fit exactly on the corresponding edges from $r$. Put the $\{S,z_i\}$ sticks on the leaf edges. This uses every stick once and gives a feasible coloring. \qed
\end{proof}

\begin{corollary}[Exact connected-diameter threshold]
Connected graphs of diameter $1$ are polynomial-time solvable for \FMMP{}, whereas NP-completeness already holds at diameter $2$.
\end{corollary}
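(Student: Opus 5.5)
The statement has two halves, and I will handle them separately.

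For diameter $1$, a connected graph of diameter one is exactly a complete graph $K_n$, with $n\ge2$. Its complement is edgeless, so $\Delta(\overline G)=0\le1$. Theorem~\ref{thm:comp-one} then gives a polynomial-time algorithm, even with arbitrary precoloring and an unbounded number of colors. As a sanity check, the $q=0$ case of that proof reads off the color multiplicities $t_a$ from the endpoint counts $D_a=(n-1)t_a$ and verifies $s_{ab}=t_at_b$ and $s_{aa}=\binom{t_a}{2}$. Precolored vertices only impose $p_a\le t_a$, with a matching total. Alternatively one could cite the complete-graph algorithm of Iburi and Uehara~\cite{IburiUehara2024}. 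I will also note that a connected graph with one vertex has diameter $0$, so it does not affect the dichotomy.

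For diameter $2$, NP-completeness is exactly Theorem~\ref{thm:diam2}. Its host graphs $K_1\vee(tK_3\dot\cup mK_1)$ are connected and have diameter exactly two whenever $m\ge1$. They are not complete, since leaves are pairwise nonadjacent and every pair is joined through the universal vertex. The fixed NO-instance output in degenerate cases should also be chosen of diameter two, for example a star $K_{1,2}$ with an impossible stick multiset, so that the reduction really lands in the class. Membership in NP holds because a final coloring is a certificate whose edge histogram can be checked in polynomial time.

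There is no real obstacle here. The only point needing care is that both reductions' image graphs, including the trivial NO-instances, have diameter exactly $2$ rather than at most $2$; I will settle this with the explicit choice above.
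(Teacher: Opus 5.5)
Your proof is correct and follows the paper's: diameter one means complete, which is polynomial (the paper cites Iburi--Uehara, while your use of Theorem~\ref{thm:comp-one} with $q=0$ is an equally valid internal citation that explicitly covers arbitrary precoloring), and hardness at diameter two is exactly Theorem~\ref{thm:diam2}. Your extra care about landing at diameter exactly two is fine. For completeness, the source case $m=0$ also needs handling: there the construction yields $K_1$, so it should likewise be mapped to a fixed diameter-two YES-instance.
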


\begin{proof}
A connected graph has diameter one exactly when it is complete, and complete host graphs are polynomial-time solvable~\cite{IburiUehara2024}. Theorem~\ref{thm:diam2} supplies hardness at diameter two. \qed
\end{proof}

\section{Parameterized Complexity on Cographs}\label{sec:cographs}

\subsection{W[1]-Hardness Parameterized by the Number of Colors}

Dumitru, Micl\u au\c s, and Popa give an $n^{O(c^2)}$ exact algorithm for fixed $c$ on cographs~\cite{DumitruMiclausPopaMax}. We show that the dependence on $c$ cannot in general be replaced by an FPT dependence unless FPT equals W[1].

\begin{theorem}\label{thm:cograph-w1}
\FMMP{} is W[1]-hard on cographs parameterized by the number of colors $c$.
\end{theorem}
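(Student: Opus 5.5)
The plan is a parameterized reduction from \textsc{Unary Bin Packing} parameterized by the number $k$ of bins, which Jansen, Kratsch, Marx, and Schlotter showed to be W[1]-hard. An instance consists of item sizes $a_1,\ldots,a_n$ encoded in unary, a capacity $B$, and the number $k$ of bins. The question is whether the items can be split into $k$ groups, each of total size at most $B$. We may assume $\sum_i a_i\le kB$, since otherwise the answer is trivially NO. Adding $kB-\sum_i a_i$ extra items of size $1$ does not change the answer and keeps the instance polynomial because $B$ is unary. After this padding every solution fills each bin exactly, so the question becomes whether the items can be partitioned into $k$ groups of total exactly $B$.

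The construction uses the fact that sticks of type $\{j,j\}$ alone force each connected component to be monochromatic. For every item $i$, take a star $K_{1,a_i}$; it has exactly $a_i$ edges and is a cograph, namely $K_1\vee a_iK_1$. The host graph $G$ is the disjoint union of these stars. A disjoint union of cographs is again a cograph, and $G$ has $\sum_i a_i=kB$ edges and polynomially many vertices. Use $c=k$ colors, one per bin. For every $j$, set $s_{jj}=B$, and create no sticks of type $\{i,j\}$ with $i\ne j$. The instance is completely uncolored, and the parameter satisfies $c=k$, so this is a parameterized reduction.

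Correctness is checked through Observation~\ref{obs:hist}.
\begin{itemize}
\item Suppose the coloring $\phi$ is feasible. Every induced edge type is then monochromatic, because no bichromatic sticks exist. Hence every edge of $G$ joins equally colored vertices, and since each star is connected, each star is monochromatic. Assign item $i$ to the bin given by the color of its star. Color $j$ then accounts for exactly $e_{jj}(\phi)=s_{jj}=B$ edges, so bin $j$ receives total size exactly $B$.
\item Conversely, suppose we have a packing that fills each bin exactly. Color every vertex of star $i$ with the bin containing item $i$. This produces exactly $B$ edges of type $\{j,j\}$ for each $j$ and no other edge types, so the coloring is feasible.
\end{itemize}

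I do not expect a substantial obstacle. The only delicate points are the following.
\begin{itemize}
\item The source problem must be the version that is W[1]-hard with $k$ as the parameter and with unary sizes. This is needed so that the number of edges of $G$, namely $kB$, is polynomial.
\item The padding must be correct: it converts ``at most $B$'' into ``exactly $B$'' without changing the answer.
\end{itemize}
Because the hardness holds with a single star forest, the theorem in fact holds for the subclass of completely uncolored star forests.
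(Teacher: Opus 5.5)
Your reduction is correct, and it takes a genuinely different route from the paper.

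\textbf{What the paper does.} It reduces from \textsc{Equitable Coloring} on cographs. It builds the host $(K_1\vee H)\dot\cup BK_2$ and precolors the universal vertex with an extra color $\alpha$, so the $q$ sticks $\{\alpha,i\}$ force every color class of $H$ to have size exactly $q$. It uses only bichromatic sticks among the ordinary colors, which forces the coloring of $H$ to be proper. The surplus sticks are absorbed on isolated buffer edges, and the parameter becomes $k+1$.

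\textbf{What you do.} You go directly to \textsc{Unary Bin Packing}, the problem from which such equitable-coloring hardness is typically derived. Your key observation is that a stick pool consisting only of loops makes every host component monochromatic. \FMMP{} then reduces to distributing components with prescribed edge counts among $c$ colors.

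\textbf{What each route buys.} Your instances are completely uncolored star forests, which are cographs of tree-depth two. You therefore get W[1]-hardness in $c$ on a much smaller class and with no precoloring. In particular, this upgrades Corollary~\ref{cor:td2} from NP-hardness with unboundedly many colors to W[1]-hardness parameterized by $c$. The paper's construction needs a precolored vertex and a dense component. In exchange, its stick pool contains no monochromatic sticks at all, so it shows that hardness persists even when every host edge must be bichromatic, i.e.\ when the final coloring is forced to be proper. The two constructions thus cover complementary stick regimes.

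\textbf{Two small details to make explicit.}
\begin{itemize}
\item When $\sum_i a_i>kB$, you must output a fixed NO-instance of \FMMP{}, since the number of sticks has to equal the number of host edges. For example, take $P_3$ with the two sticks $\{1,1\}$ and $\{2,2\}$.
\item The padding stays polynomial because you may assume $B\le\sum_i a_i$ and $k\le n$. Otherwise the bin-packing instance is trivially decided, and the bound $kB\le n\sum_i a_i$ follows.
\end{itemize}
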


\begin{proof}
We reduce from \textsc{Equitable Coloring} on cographs, which asks whether a graph $H$ has a proper $k$-coloring in which every color class has size either $\lfloor |V(H)|/k\rfloor$ or $\lceil |V(H)|/k\rceil$; parameterized by $k$, this problem is W[1]-hard even on cographs~\cite{GomesGuedesSantos2023}. Let $(H,k)$ be such an instance with $H=(V,E)$, and write $n=|V|$ and $m=|E|$. We first make $n$ divisible by $k$. If $n=qk+r$ with $0<r<k$, add a disjoint clique $K_{k-r}$. The resulting graph is still a cograph. It has an equitable $k$-coloring if and only if $H$ does: an equitable coloring of $H$ uses $r$ colors $q+1$ times and the other $k-r$ colors $q$ times, and after permuting color names the new clique can use exactly the latter $k-r$ colors. We may therefore assume $n=qk$.

Introduce one additional color $\alpha$ and construct the host graph $G'=(K_1\vee H)\dot\cup B K_2$, where $B=m(\binom{k}{2}-1)$. For the nontrivial W[1]-hard instances we may assume $k\ge2$, so $B\ge0$. Let $p$ be the universal vertex in $K_1\vee H$ and precolor $p$ with $\alpha$. Cographs are closed under join and disjoint union, so $G'$ is a cograph.

For every ordinary color $i\in[k]$, create exactly $q$ sticks of type $\{\alpha,i\}$. For every unordered pair $1\le i<j\le k$, create exactly $m$ sticks of type $\{i,j\}$. There are no monochromatic sticks. The number of sticks is $qk+m\binom{k}{2}=n+m\binom{k}{2}$. The host has $n$ edges incident with $p$, the $m$ edges of $H$, and $B$ isolated buffer edges, so $|E(G')|=n+m+B=n+m\binom{k}{2}$ as required.

Suppose first that $H$ has an equitable proper $k$-coloring. Because $n=qk$, every color is used exactly $q$ times. Color the copy of $H$ accordingly. Put the $q$ sticks $\{\alpha,i\}$ on the edges from $p$ to the $q$ vertices of color $i$. Every source edge of $H$ is bichromatic, so it consumes one of the $m$ available sticks of its color pair. For each pair $i<j$, the unused copies of $\{i,j\}$ are placed arbitrarily on buffer edges. Their total number is exactly $B$, so all sticks and edges are used.

Conversely, consider a feasible Matching-Match coloring. The precolored vertex $p$ has degree $n$, and the construction contains exactly $n$ sticks containing $\alpha$. Therefore all such sticks are used on edges incident with $p$. Since there are exactly $q$ sticks $\{\alpha,i\}$, precisely $q$ vertices of the copy of $H$ receive each color $i$. Every stick not containing $\alpha$ is bichromatic, so every edge of $H$ has differently colored endpoints. Hence the induced coloring of $H$ is proper and equitable.

The reduction maps parameter $k$ to exactly $k+1$ Matching-Match colors and is computable in polynomial time. Therefore it is a parameterized reduction. \qed
\end{proof}

\section{Uncolored Paths and Cycles}\label{sec:paths-cycles}

\subsection{Eulerian Characterization}

For a stick multiset $S$, define its \emph{color multigraph} $H_S$: the vertices are the colors that occur on sticks, and every stick of type $\{a,b\}$ becomes one undirected edge between colors $a$ and $b$, with parallel edges and loops allowed. An \emph{Euler trail} is a walk that uses every edge exactly once; an \emph{Euler circuit} is an Euler trail whose initial and final vertices coincide.

\begin{theorem}\label{thm:euler}
If no host vertex is precolored, \FMMP{} is solvable in $O(m+c)$ time on paths and cycles. A path instance is feasible if and only if $H_S$ has an Euler trail, and a cycle instance is feasible if and only if $H_S$ has an Euler circuit.
\end{theorem}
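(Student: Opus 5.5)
By Observation~\ref{obs:hist}, an instance on an uncolored path or cycle is feasible exactly when some coloring $\phi$ of the host vertices realizes the stick histogram. The plan is to translate colorings into walks in $H_S$, and conversely.

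For a path $v_0v_1\cdots v_m$, a coloring $\phi$ gives the sequence of colors $\phi(v_0),\phi(v_1),\ldots,\phi(v_m)$. The $i$-th host edge carries the pair $\{\phi(v_{i-1}),\phi(v_i)\}$. Feasibility says that the multiset of these $m$ pairs equals $S$. I therefore map the $i$-th host edge to a distinct copy of that stick type in $H_S$; such an injective assignment exists because the multiplicities agree. The resulting sequence is a walk in $H_S$ that uses every edge exactly once, i.e.\ an Euler trail. A loop $\{a,a\}$ corresponds to two consecutive equal colors. Conversely, an Euler trail $a_0,e_1,a_1,\ldots,e_m,a_m$ in $H_S$ yields the coloring $\phi(v_i)=a_i$. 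Each $e_i$ joins $a_{i-1}$ and $a_i$, and each stick is used once, so the histogram matches. For a cycle $v_0\cdots v_{m-1}v_0$ the same correspondence holds, except that the closing edge forces the walk to return to its start, giving an Euler circuit. Some small cases need a quick check: $m=0$ is trivial; a cycle requires $m\ge3$ by simplicity, and the walk need not be simple, so repeated colors cause no difficulty.

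For the running time, I invoke the classical criterion for multigraphs with loops. There is an Euler circuit iff all non-isolated vertices lie in one connected component and every degree is even, with a loop counting $2$. There is an Euler trail iff the non-isolated vertices are connected and at most two vertices have odd degree. Computing the degrees and checking connectivity with one BFS/DFS over the $c$ color vertices and $m$ edges takes $O(m+c)$ time. If a witness coloring is desired, Hierholzer's algorithm produces the trail or circuit within the same bound.

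The only step needing care is the bijection between walks and colorings when $S$ has parallel copies and loops. Here I will emphasize that a trail is a sequence of edge \emph{occurrences}, so multiplicities are respected automatically. This is the main, though mild, obstacle; the rest is standard.
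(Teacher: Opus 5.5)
Your proposal is correct and follows essentially the same route as the paper: you set up a correspondence between colorings of the labeled path (resp.\ cycle) and Euler trails (resp.\ circuits) of $H_S$, then apply the classical degree-and-connectivity criterion in $O(m+c)$ time. Your added care about parallel sticks and loops, and the explicit BFS/DFS connectivity check, only spell out what the paper's proof leaves implicit.
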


\begin{proof}
Let the host path be $v_0v_1\cdots v_m$. A feasible solution gives a color sequence $c_0,c_1,\ldots,c_m$, where the stick on edge $v_{i-1}v_i$ has type $\{c_{i-1},c_i\}$. Reading the sticks in path order therefore traverses every edge of $H_S$ exactly once, so it is an Euler trail. Conversely, an Euler trail of $H_S$ gives an ordering and orientation of all sticks in which consecutive sticks share the appropriate color. Assigning the successive color vertices of the trail to $v_0,\ldots,v_m$ produces a feasible host coloring. The same argument for a cycle gives an Euler circuit.

An undirected multigraph has an Euler circuit exactly when all non-isolated vertices lie in one connected component and every degree is even; it has an Euler trail exactly when the same connectivity condition holds and the number of odd-degree vertices is zero or two. Degrees and connectivity of $H_S$ are computed by one scan of the sticks, giving $O(m+c)$ time. \qed
\end{proof}

\begin{corollary}\label{cor:endpoints}
A path remains solvable in $O(m+c)$ time when only its two endpoints may be precolored: one asks whether $H_S$ has an Euler trail whose initial and terminal colors are compatible with the prescribed endpoint colors.
\end{corollary}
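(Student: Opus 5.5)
The plan is to reuse the bijection from the proof of Theorem~\ref{thm:euler}. A feasible coloring of the path $v_0v_1\cdots v_m$ with colors $c_0,\ldots,c_m$ is the same thing as an Euler trail of $H_S$ that starts at color $c_0$ and ends at color $c_m$. The precoloring on $v_0$ and $v_m$ only restricts the start and end colors of this trail. So feasibility is equivalent to asking whether $H_S$ has an Euler trail from a color compatible with $\mathcal C_0(v_0)$ to a color compatible with $\mathcal C_0(v_m)$; an uncolored endpoint is compatible with every color.

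The first step is to compute, in one scan of the sticks as in Theorem~\ref{thm:euler}, the degrees of $H_S$, the set $O$ of odd-degree colors, and whether all non-isolated colors lie in one connected component. If the connectivity test fails or $|O|\notin\{0,2\}$, we reject. The standard refinement of Euler's theorem then tells us where the trail can start and end. If $|O|=2$, say $O=\{a,b\}$, every Euler trail has endpoints exactly $a$ and $b$, in either order, since the trail can be reversed. If $|O|=0$, every Euler trail is closed, and an Euler circuit can start (and end) at any non-isolated color. So the set of realizable ordered endpoint pairs $(c_0,c_m)$ is $\{(a,b),(b,a)\}$ in the first case and $\{(x,x): x$ non-isolated$\}$ in the second.

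The second step is a constant-size case check against the precolors. When $|O|=2$, we test whether $(a,b)$ or $(b,a)$ agrees with the prescribed colors of $v_0$ and $v_m$. When $|O|=0$, we need $c_0=c_m$. If both endpoints are precolored, their colors must be equal and that color must be non-isolated in $H_S$. If only one is precolored, that color must be non-isolated. If neither is precolored, the instance is unconstrained. The degenerate case $m=0$, where $H_S$ is empty, is handled trivially. The converse direction reads the colors along the chosen trail onto $v_0,\ldots,v_m$, exactly as in Theorem~\ref{thm:euler}, and this automatically respects the endpoint precolors.

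The only point needing care is justifying the endpoint claim: when $|O|=0$, an Euler circuit exists through every non-isolated vertex and can be rotated to start there. This is standard, since a circuit can be cut at any visit of a chosen vertex. Every check beyond the degree and connectivity scan is $O(1)$ given the degree array, so the total running time stays $O(m+c)$.
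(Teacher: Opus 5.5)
Your proposal is correct and follows the same route as the paper. The paper gives no separate proof: it derives the corollary directly from the path--trail bijection of Theorem~\ref{thm:euler} and leaves implicit how to test for a trail with admissible endpoints. Your case analysis on the odd-degree set supplies that missing detail and justifies the $O(m+c)$ bound: $|O|=2$ forces the endpoint pair $\{a,b\}$ in either order, while $|O|=0$ forces a closed trail that can be rotated to start at any non-isolated color.
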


\subsection{Counting Feasible Colorings on Paths and Cycles}

The decision problem on these host graphs is linear-time solvable, but the corresponding counting problem is hard.

\begin{theorem}\label{thm:count-cycle}
\CountFMMP{} is $\#P$-complete even on completely uncolored cycles.
\end{theorem}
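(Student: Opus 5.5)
Membership in $\#P$ is routine: a feasible final coloring is a polynomial-size witness whose edge-type histogram can be checked in polynomial time. For hardness, the plan is to use the Euler correspondence of Theorem~\ref{thm:euler}. I would reduce from counting Eulerian circuits in undirected graphs, which is $\#P$-complete (Brightwell and Winkler).

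First I would make the correspondence quantitative. Label the cycle $v_0v_1\cdots v_{m-1}$. A feasible coloring is a cyclic sequence of colors $c_0,\ldots,c_{m-1}$ whose consecutive pairs realize the multiset $S$. When $H_S$ is a \emph{simple} graph, such a sequence is exactly a closed Euler walk with a distinguished starting position and a direction. The edges used are determined by consecutive colors, so no extra stick-permutation factor appears. The number of feasible colorings is therefore $2m\cdot\mathrm{ec}(H_S)$, where $\mathrm{ec}$ counts Euler circuits as cyclic, undirected edge sequences. The only care needed is to fix the convention used by the source problem: rooted at an edge, or oriented. Each convention differs from the coloring count by an explicit factor $m$, $2$, or $2m$.

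Second, given a connected Eulerian simple graph $H$ with $m$ edges, I would set $S$ to be one stick $\{a,b\}$ for each edge $ab$ and take the host to be the $m$-cycle. The \CountFMMP{} oracle then returns $2m\cdot\mathrm{ec}(H)$, and dividing recovers $\mathrm{ec}(H)$. This is a parsimonious-up-to-a-known-factor reduction, so it is computable in polynomial time. Degenerate cases, such as $m\le 2$ or $H$ not connected (where the answer is zero), are handled directly.

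The main obstacle is the precise counting statement of the source problem. Brightwell--Winkler count Eulerian circuits in undirected graphs, and one must check that hardness holds for simple graphs and match their normalization exactly. A further subtlety is symmetry when $m$ is small or the circuit is periodic. Because $H$ is simple, though, a closed walk cannot repeat its edge sequence with a shorter period, so the map from (circuit, start, direction) to colorings is injective. I would verify this step carefully. If only multigraph hardness were available, I would first subdivide each edge. Subdividing doubles the edges and leaves the Euler circuits in bijection, and it yields a simple graph, so the reduction above still applies.
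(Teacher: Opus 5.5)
Your proposal is correct and is essentially the paper's proof: one color per vertex of a simple connected Eulerian graph, one stick per edge, and the labeled $m$-cycle as host, so that feasible colorings correspond bijectively to rooted, oriented Euler circuits. Your explicit factor $2m$, valid for $m\ge 3$ because all edges are distinct, simply makes concrete the rooting/orientation normalization that the paper invokes by citing Brightwell and Winkler.
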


\begin{proof}
Membership in $\#P$ follows because a proposed final coloring can be checked in polynomial time using Observation~\ref{obs:hist}. For hardness, we reduce from counting Euler circuits in a connected undirected Eulerian graph, which is $\#P$-complete by Brightwell and Winkler~\cite{BrightwellWinkler2005}. Rooting an Euler circuit at a specified position and choosing a traversal direction changes the count only by a polynomial-time computable factor, so we use the equivalent rooted, oriented version. Let $H=(W,F)$ be a connected simple Eulerian graph with $m=|F|$. Give every source vertex $w$ its own color $c_w$. Use the labeled cycle $C_m=v_0v_1\cdots v_{m-1}v_0$ as the host, with no precolored vertices, and create one stick $\{c_x,c_y\}$ for every source edge $xy\in F$.

Because $H$ is simple and all source vertices have distinct colors, every stick type identifies one source edge. A feasible host coloring $\phi(v_i)=c_{w_i}$ therefore gives the edge sequence $w_0w_1,w_1w_2,\ldots,w_{m-1}w_0$, which uses every edge of $H$ exactly once: it is a rooted, oriented Euler circuit. Conversely, every such Euler circuit gives exactly one coloring of the labeled cycle. Thus the construction is parsimonious from the rooted, oriented version of \#\textsc{Eulerian Circuit}, and the result follows. \qed
\end{proof}

\begin{theorem}\label{thm:count-path}
\CountFMMP{} is $\#P$-complete even on completely uncolored paths.
\end{theorem}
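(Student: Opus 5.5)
Membership in $\#P$ is handled exactly as in Theorem~\ref{thm:count-cycle}: a proposed final coloring of the labeled path is checked against the stick histogram in polynomial time via Observation~\ref{obs:hist}. For hardness we reuse the same source problem, counting Euler circuits in a connected simple Eulerian graph $H=(W,F)$ (Brightwell and Winkler~\cite{BrightwellWinkler2005}). We reduce it to counting colorings of a path by \emph{cutting the circuit open at a distinguished vertex}.

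The construction is as follows. Fix a source vertex $w^\ast\in W$. Introduce a fresh source vertex $w'$ and form $H'$ by detaching one edge $w^\ast x$ incident with $w^\ast$ and reattaching it as $w'x$. Then $H'$ is connected (for $m$ large enough, after discarding trivial cases), and its only odd-degree vertices are $w^\ast$ and $w'$. Euler trails of $H'$ start and end at $w^\ast$ and $w'$. Every such trail, read so that it ends at $w'$ via the edge $xw'$, corresponds bijectively to an Euler circuit of $H$ rooted at $w^\ast$ whose last edge is $xw^\ast$.

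To avoid choosing $x$, I would instead use a pendant construction. Add a new vertex $w'$ adjacent only to $w^\ast$, together with a second new vertex $w''$ also adjacent only to $w^\ast$. Now $w'$ and $w''$ are the unique odd vertices. Every Euler trail runs from $w'$ to $w^\ast$, then traverses an Euler circuit of $H$ rooted at $w^\ast$ with a chosen direction, then goes from $w^\ast$ to $w''$, or it does the same in reverse. The resulting graph is simple, connected and has $m+2$ edges. Give every vertex its own color, take the labeled path $v_0\cdots v_{m+2}$ as host, and create one stick per edge.

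The counting step then mirrors Theorem~\ref{thm:count-cycle}. Since the graph is simple and the colors are distinct, each stick type identifies a unique edge. By Theorem~\ref{thm:euler}'s correspondence, feasible colorings of the labeled path biject with oriented Euler trails. Here the orientation is fixed by which endpoint the labeled $v_0$ is, so there are two choices of start. Hence the number of feasible colorings equals $2\cdot N_{w^\ast}$, where $N_{w^\ast}$ is the number of Euler circuits of $H$ that start and end at $w^\ast$, are taken as edge sequences, and are oriented.

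The main obstacle is relating $N_{w^\ast}$ to the number $\mathrm{ec}(H)$ of (unrooted, undirected) Euler circuits. I expect the relation $N_{w^\ast}=2\cdot(d_H(w^\ast)/2)\cdot \mathrm{ec}(H)=d_H(w^\ast)\,\mathrm{ec}(H)$. The reason is that each cyclic circuit passes through $w^\ast$ exactly $d_H(w^\ast)/2$ times, giving that many distinct starting positions, and it has two directions. These rotations are distinct because the edges are distinct. Since the factor $2d_H(w^\ast)$ is polynomial-time computable and nonzero, dividing the oracle answer recovers $\mathrm{ec}(H)$. This gives a polynomial-time Turing reduction (in fact a weakly parsimonious one), which establishes $\#P$-hardness.
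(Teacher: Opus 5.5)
Your proof is correct and is essentially the paper's argument: the same two-pendant construction at a root vertex (the paper's $s,t$ attached to $r$), the same bijection between colorings of the labeled path and oriented Euler trails of the augmented graph, and the same factor of two from reversing the path. Your explicit identity $N_{w^\ast}=d_H(w^\ast)\,\mathrm{ec}(H)$ just spells out the rooting/orientation factor that the paper only asserts is polynomial-time computable.
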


\begin{proof}
Membership in $\#P$ is as above. For hardness, we reduce from the rooted, oriented Euler-circuit counting problem: given a connected undirected Eulerian graph $H$ and a specified root vertex $r$, count Euler circuits that start at $r$ with a chosen traversal direction. This variant is $\#P$-complete by the result of Brightwell and Winkler, up to the polynomial-time computable rooting/orientation factor~\cite{BrightwellWinkler2005}. Let $H$ be a connected simple Eulerian graph with specified root $r$. Add two new vertices $s,t$ and the two pendant edges $sr$ and $rt$, obtaining a graph $H'$. The only odd-degree vertices of $H'$ are $s$ and $t$, so every Euler trail starts at one of them and ends at the other.

Give every vertex of $H'$ a distinct color, create one stick for every edge of $H'$, and take a labeled path with $|E(H')|$ edges as the completely uncolored host graph. As in Theorem~\ref{thm:count-cycle}, feasible colorings are in bijection with oriented Euler trails of $H'$. Every trail from $s$ to $t$ begins with $sr$ and ends with $rt$; deleting those two edges leaves a rooted, oriented Euler circuit of $H$ based at $r$. Conversely, every rooted, oriented Euler circuit of $H$ extends uniquely to an $s$--$t$ Euler trail of $H'$. Reversing the host path gives the corresponding $t$--$s$ trail, so the number of feasible path colorings is exactly twice the number of rooted, oriented Euler circuits of $H$. Division by two recovers the source count. \qed
\end{proof}

\subsection{Polynomial Counting on Stars and Complete Graphs}

The counting separation is not universal among decision-easy graph classes.

\begin{theorem}\label{thm:count-star-complete}
Counting feasible final colorings is polynomial-time solvable on a star and on a complete graph, even with arbitrary precoloring.
\end{theorem}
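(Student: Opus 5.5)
For both host graphs the plan is the same. First show, via Observation~\ref{obs:hist}, that the stick histogram forces the \emph{color-count vector} $(t_a)_{a\in C}$, up to a polynomial number of candidates. Then count the colorings that extend $\mathcal C_0$ and have a given count vector by a product of multinomial or binomial coefficients, computed in exact big-integer arithmetic.

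\textbf{Stars.} Let the star have center $z$ and leaves $\ell_1,\ldots,\ell_{m}$. We enumerate the color $a$ of $z$. If $z$ is precolored there is one choice; otherwise there are at most $c$ choices. A choice of $a$ is valid only if every stick contains color $a$. For a valid $a$, the leaf colors are then determined as a multiset: exactly $s_{ab}$ leaves get color $b$ for $b\ne a$, and $s_{aa}$ leaves get color $a$. Call these required counts $t_b$. Let $p_b$ be the number of leaves precolored $b$. We require $p_b\le t_b$ for every $b$. The uncolored leaves are then distributed among the colors with $t_b-p_b$ leaves of color $b$, and the number of ways is the multinomial coefficient $\binom{U}{(t_b-p_b)_b}$, where $U$ is the number of uncolored leaves. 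The answer is the sum of this quantity over the valid choices of $a$. If $m\ge2$, at most two colors lie in every stick, so the sum has at most two terms. Either way, summing over at most $c$ choices of $a$ is polynomial.

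\textbf{Complete graphs.} Let $D_a=2s_{aa}+\sum_{b\ne a}s_{ab}$ be the number of stick endpoints of color $a$. In $K_n$ every vertex has degree $n-1$, so any feasible coloring has $t_a=D_a/(n-1)$ for every $a$. This count vector is therefore forced; for $n=1$ there are no edges and we handle the case separately. Conversely, a coloring of $K_n$ with counts $t_a$ induces $t_at_b$ edges of type $\{a,b\}$ and $\binom{t_a}{2}$ edges of type $\{a,a\}$. Feasibility thus reduces to a single check: $t_a$ must be a nonnegative integer, $\sum_a t_a=n$, and $s_{ab}=t_at_b$ and $s_{aa}=\binom{t_a}2$ must hold for all $a,b$. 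This check takes $O(c^2)$ time. If it passes, the feasible colorings are exactly those extending $\mathcal C_0$ with counts $t_a$. There are $\binom{U}{(t_a-p_a)_a}$ of them, where $p_a$ is the number of vertices precolored $a$ and $U$ the number of uncolored vertices; the count is $0$ if some $p_a>t_a$. If the check fails, the count is $0$.

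\textbf{Degenerate case and main obstacle.} The only real subtlety is $n=1$ for the complete graph, and $m=0$ or $m=1$ for the star. For $n=1$ the answer is $c$ or $1$ depending on precoloring. For the star with $m=1$, the "center" is not canonical, but fixing a labeled center still gives a correct count because colorings are counted as functions on labeled vertices. I expect the main point needing care to be the claim that the count vector determines the histogram. For the complete graph this follows directly from the formula above. For the star it holds because the center's color together with the leaf color counts fixes every edge type. The multinomial coefficients have polynomially many bits, so the arithmetic is polynomial.
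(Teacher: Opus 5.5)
Your proposal is correct and follows essentially the same route as the paper. For the star, you enumerate the center color (only the at most two colors common to all sticks survive), read off the forced leaf-color counts $s_{ab}$ and $s_{aa}$, and count extensions of the precoloring by a multinomial coefficient, with the paper's extra check $\sum_b q_b=m$ following automatically once every stick contains the center color. For $K_n$, you force $t_a=D_a/(n-1)$, verify $s_{ab}=t_at_b$ and $s_{aa}=\binom{t_a}{2}$, and again count by a multinomial.
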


\begin{proof}
First let the host be a star with center $z$ and $m$ leaves. Fix a candidate center color $a$ compatible with the precolor of $z$. Every stick must then contain $a$; otherwise this candidate is impossible. If $b\ne a$, let $q_b=s_{\min\{a,b\},\max\{a,b\}}$, and put $q_a=s_{aa}$. These values force exactly $q_b$ leaves to receive color $b$. Let $p_b$ be the number of leaves precolored $b$. The candidate $a$ is feasible exactly when every $q_b\ge p_b$ and $\sum_bq_b=m$, and then the number of extensions is the multinomial coefficient $(m-\sum_bp_b)!/\prod_b(q_b-p_b)!$. We sum this value over all center colors allowed by the center precolor. If $m>0$, any feasible center color belongs to the endpoint-color set of every stick, so there are at most two candidates after intersecting these sets. Hence one scan of the sticks and leaves suffices, followed by the required factorial arithmetic.

Now let the host be $K_n$ with $n\ge2$. Put $D_i=2s_{ii}+\sum_{j\ne i}s_{ij}$. Every vertex has degree $n-1$, so if $x_i$ vertices receive color $i$, necessarily $D_i=(n-1)x_i$ and hence $x_i=D_i/(n-1)$. We reject unless all $x_i$ are nonnegative integers summing to $n$ and the required edge multiplicities satisfy $s_{ii}=\binom{x_i}{2}$ and $s_{ij}=x_ix_j$ for every $i<j$. Let $p_i$ be the number of vertices precolored $i$. If every $p_i\le x_i$, the uncolored vertices can be assigned the remaining multiplicities in exactly $(n-\sum_ip_i)!/\prod_i(x_i-p_i)!$ ways; otherwise the count is zero. Computing all $D_i$ and checking all pair multiplicities takes $O(n+c^2)$ time after the stick multiplicities have been read. The required factorials and multinomial coefficients have polynomial encoding length and can be evaluated exactly in polynomial time. The cases $n\le1$ are immediate. \qed
\end{proof}

\section{Disconnected Uncolored Host Graphs}\label{sec:disconnected}

\subsection{A General Decomposition Transfer}

The Eulerian characterization above exploits the fact that one connected path or cycle consumes the stick pool in a single coherent walk. For disconnected host graphs, the common stick pool must instead be partitioned among components. This connects Matching-Match directly to graph-decomposition problems.

\begin{theorem}\label{thm:h-transfer}
Let $H$ be a fixed connected simple graph with $|E(H)|\ge3$ and $\operatorname{diam}(H)\le2$. Then \FMMP{} is NP-complete on completely uncolored host graphs that are disjoint unions of copies of $H$.
\end{theorem}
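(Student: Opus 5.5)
Membership in NP is immediate: we guess the final coloring and check its edge-type histogram via Observation~\ref{obs:hist}. For hardness we reduce from \textsc{$H$-Decomposition}: given a graph $F$, can $E(F)$ be partitioned into subgraphs isomorphic to $H$? By the Dor--Tarsi theorem this is NP-complete for every fixed connected $H$ with at least three edges. The construction follows the pattern of Theorems~\ref{thm:comp-two} and~\ref{thm:diam2}. Given $F$ with $|E(F)|=m$, we output a fixed NO-instance unless $|E(H)|$ divides $m$. Otherwise the host graph is the disjoint union of $t=m/|E(H)|$ copies of $H$. We introduce one color $c_x$ for each vertex $x\in V(F)$ and one stick $\{c_x,c_y\}$ for each edge $xy\in E(F)$. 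The host is completely uncolored, and the stick count equals $t|E(H)|=|E(G)|$. The color multigraph $H_S$ is then exactly $F$, and it is simple and loopless.

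For the forward direction, take an $H$-decomposition of $F$ into copies $H_1,\dots,H_t$ with embeddings $\psi_j:H\to F$. We color the $j$-th host copy by $v\mapsto c_{\psi_j(v)}$. Each host edge then receives the type of a distinct edge of $F$, and every edge of $F$ is used exactly once. Hence the histogram matches. One subtlety is that a host copy of $H$ may contain isolated vertices only if $H$ does, and $H$ is connected, so every host vertex is colored by some endpoint. This is fine.

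For the backward direction, which I expect to be the main obstacle, take a feasible coloring. Restricted to one host copy, it is a map $\phi_j:V(H)\to V(F)$ such that every edge $uv$ of $H$ is sent to an edge $\phi_j(u)\phi_j(v)$ of $F$. There are no loop sticks, so adjacent vertices get distinct colors. Moreover, the images of edges, taken over all copies, use each edge of $F$ exactly once. The difficulty is that $\phi_j$ is only a homomorphism, not necessarily injective. Two nonadjacent vertices $u,w$ of $H$ could receive the same color, in which case the image would not be a copy of $H$. This is where $\operatorname{diam}(H)\le2$ enters. If $u\ne w$ are nonadjacent, they have a common neighbor $z$. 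If $\phi_j(u)=\phi_j(w)=x$ and $\phi_j(z)=y$, then the two host edges $uz$ and $wz$ both require the stick $\{c_x,c_y\}$. But $F$ is simple, so that stick type appears only once, which is a contradiction. Adjacent vertices are already distinct. Hence each $\phi_j$ is injective, and since it maps edges to edges, its image is a subgraph of $F$ isomorphic to $H$ (the image edge set is exactly $\phi_j(E(H))$, which has $|E(H)|$ distinct edges).

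Because the multiset of all image edges equals $E(F)$, the images $\phi_1(H),\dots,\phi_t(H)$ partition $E(F)$ into copies of $H$, giving an $H$-decomposition. The reduction is polynomial: the output has size $O(m)$, and since $H$ is fixed, copying it $t$ times is linear. The hypothesis $|E(H)|\ge3$ is used only to invoke the Dor--Tarsi hardness; connectivity of $H$ is used both there and to make the diameter condition meaningful.
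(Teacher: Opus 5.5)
Your proposal is correct and follows the paper's proof essentially step for step. Both reduce from $H$-\textsc{Decomposition} via Dor--Tarsi, using one color per source vertex and one stick per source edge. Both establish injectivity on each host component in the same way: loop-freeness rules out equal colors on adjacent vertices, and for nonadjacent vertices a common neighbor (from $\operatorname{diam}(H)\le2$) together with the simplicity of $F$ gives the contradiction.
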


\begin{proof}
Membership in NP is immediate. We reduce from \textsc{$H$-Decomposition}: for the fixed graph $H$, the input is a graph $F$ and the question is whether $E(F)$ can be partitioned into edge-disjoint subgraphs isomorphic to $H$. Dor and Tarsi proved this problem NP-complete for every fixed connected $H$ with at least three edges~\cite{DorTarsi1997}. Let $F=(V,E)$ be an instance. If $|E|$ is not divisible by $|E(H)|$, output a fixed NO-instance; otherwise put $t=|E|/|E(H)|$. Introduce one color $c_v$ for every source vertex $v\in V$ and one stick $\{c_u,c_v\}$ for every source edge $uv\in E$. Let the host graph be the disjoint union of $t$ copies of $H$, with no precolored vertices.

If $F$ has an $H$-decomposition, color each host copy according to the corresponding source copy of $H$ and place the source-edge sticks on the matching host edges. This is feasible.

Conversely, consider one connected host component and suppose it has a feasible coloring. Two adjacent host vertices cannot receive the same source color because the source graph is simple and hence there is no loop stick $\{c_v,c_v\}$. Suppose two nonadjacent host vertices $x,y$ received the same color $c_v$. Since $\operatorname{diam}(H)\le2$, they have a common neighbor $z$ in this component. If $z$ receives color $c_u$, then both host edges $xz$ and $zy$ require the same stick type $\{c_v,c_u\}$. Because the source graph is simple, only one such stick exists, a contradiction. Therefore the coloring is injective on every host component.

The $|E(H)|$ sticks used by one component consequently correspond to a source subgraph isomorphic to $H$. Different host components use disjoint sticks, and every stick is used once, so these source copies partition $E(F)$. Thus they form an $H$-decomposition. \qed
\end{proof}

\begin{corollary}\label{cor:td2}
Completely uncolored \FMMP{} is NP-complete on graphs of tree-depth $2$.
\end{corollary}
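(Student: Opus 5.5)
The plan is to apply Theorem~\ref{thm:h-transfer} with a fixed graph $H$ whose disjoint unions have tree-depth exactly $2$. The natural choice is the star $H=K_{1,3}$. It is connected and simple, it has $|E(H)|=3\ge3$ edges, and its diameter is $2$, because any two leaves share the center as a common neighbor. So Theorem~\ref{thm:h-transfer} shows that completely uncolored \FMMP{} is NP-complete on disjoint unions of copies of $K_{1,3}$.

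Next I bound the tree-depth of these hosts. For one copy of $K_{1,3}$, root the forest at the center and hang the three leaves as its children. This rooted tree has height $2$, counting vertices on a root-to-leaf path. Its closure joins each leaf to the center, so it contains $K_{1,3}$. A disjoint union of copies is handled by the rooted forest that is the union of these trees, which still has height $2$. Hence every such host has tree-depth at most $2$.

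For the matching lower bound, note that any graph with at least one edge has tree-depth at least $2$: an edge requires one endpoint to be an ancestor of the other, so some root-to-leaf path has at least two vertices. The hard instances contain edges, so their tree-depth is exactly $2$. The fixed NO-instance output in the reduction can likewise be taken to be a union of $K_{1,3}$ copies, for example a single copy together with an incompatible stick multiset.

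The only step needing care is matching the height convention of the paper's tree-depth definition. The statement is meant to say that stars have tree-depth $2$, which is exactly the vertex-counting convention used above. Otherwise the corollary follows directly from Theorem~\ref{thm:h-transfer}, with Dor and Tarsi's hardness of $K_{1,3}$-decomposition already incorporated there. Membership in NP is inherited from that theorem.
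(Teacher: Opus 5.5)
Your proposal is correct and follows the paper's proof: apply Theorem~\ref{thm:h-transfer} with $H=K_{1,3}$ and observe that disjoint unions of nontrivial stars have tree-depth exactly two. Your extra checks make explicit what the paper's one-line proof leaves implicit: the hypotheses of the theorem, both tree-depth bounds under the vertex-counting height convention, and keeping the fixed NO-instance inside the class.
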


\begin{proof}
Apply Theorem~\ref{thm:h-transfer} with $H=K_{1,3}$. Every connected component is a nontrivial star, so the resulting host graph has tree-depth exactly two. \qed
\end{proof}

\subsection{Star Forests with Precolored Centers}

The previous corollary shows hardness on completely uncolored star forests. Precoloring every center removes the coupling inside each component.

\begin{theorem}\label{thm:star-forest}
Let the host graph be a star forest with $m$ edges, and suppose every center of a nontrivial star is precolored. Leaves may be precolored or uncolored. Then \FMMP{} is solvable in $(m+c)^{1+o(1)}$ time.
\end{theorem}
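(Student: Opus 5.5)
The plan is to reduce the problem to a bipartite transportation (flow) problem between stars and stick types, and then to invoke the almost-linear max-flow algorithm of van den Brand et al.~\cite{BrandEtAl2023Flow}, as in Theorem~\ref{thm:comp-one}. Isolated vertices carry no edges and can be given any color compatible with their precolor, so we ignore them. Consider a nontrivial star with center $z$ precolored $a$. Every edge of this star is a stick of type $\{a,b\}$, where $b$ is the final color of the corresponding leaf. Conversely, choosing a stick type $\{a,b\}$ for a leaf $\ell$ forces $\phi(\ell)=b$. For a type $\{a,b\}$ with $a\ne b$, the other endpoint $b$ is determined by $a$. By Observation~\ref{obs:hist}, feasibility is therefore equivalent to assigning sticks to edges bijectively, where an edge $z\ell$ may receive type $\{a,b\}$ only if $\ell$ is uncolored or precolored $b$. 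Stars are otherwise independent, so only the shared stick pool couples them.

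First handle the forced edges. Every edge whose leaf is precolored $b$ must receive type $\{a,b\}$, with $a$ the color of its center. We sum these forced demands per type and reject if some type has fewer sticks than required, otherwise subtracting them from $s_{ab}$. The remaining edges have uncolored leaves. Such an edge at a center colored $a$ can take any remaining stick containing $a$, and stars sharing a center color are interchangeable. Let $u_a$ be the number of uncolored-leaf edges at centers colored $a$. The question becomes: can the remaining sticks be split so that exactly $u_a$ of them are charged to color $a$, where each stick $\{a,b\}$ is charged to $a$ or to $b$ and a loop $\{a,a\}$ only to $a$? This is the same orientation/flow gadget as in Theorem~\ref{thm:comp-one}. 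We build a source, one node per type with positive remaining multiplicity $r_{ab}$ joined to the source with capacity $r_{ab}$, arcs from the type node to color nodes $a$ and $b$, and arcs from color node $a$ to the sink with capacity $u_a$. We accept if and only if the max flow saturates every source arc, which requires $\sum r_{ab}=\sum_a u_a$; this sum check is done first. A stick type $\{a,b\}$ with neither $a$ nor $b$ a center color then has no outgoing capacity and correctly causes rejection.

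Correctness is proved in both directions. A feasible coloring induces such a flow by counting, per type, how many unforced copies sit at centers of each color. Conversely, an integral saturating flow tells us how many copies of $\{a,b\}$ go to centers colored $a$. We distribute these arbitrarily among the uncolored-leaf edges at such centers and color each leaf with the opposite endpoint. Precolored leaves and isolated vertices are already consistent. The network has at most $m$ type nodes with positive multiplicity and $c$ color nodes, so it has $O(m+c)$ arcs with capacities at most $m$. The construction (histograms over the $m$ sticks and edges, using hashing or sorting by type) takes $O(m+c)$ time, and the flow takes $(m+c)^{1+o(1)}$. The main obstacle is the bookkeeping that keeps everything within $(m+c)^{1+o(1)}$: we must avoid touching all $c^2$ types and only touch the types that actually occur. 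We must also check that a flow of value exactly $m_{\text{free}}$, with every source arc saturated, really captures both the stick-use and the edge-cover constraints.
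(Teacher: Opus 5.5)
Your proposal is correct and is essentially the paper's proof: subtract the copies forced by precolored leaves, then decide the residual assignment with the same compressed source--type--color--sink network with quotas $u_a$ (the paper's $h_a$), solved by the van den Brand et al.\ max-flow algorithm. Your preliminary check $\sum r_{ab}=\sum_a u_a$ always passes, because sticks and edges are equinumerous and each forced edge removes exactly one of each, so it is harmless but redundant.
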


\begin{proof}
For every host edge whose leaf is precolored, both endpoint colors are fixed. Thus, if its center has color $a$ and its leaf has color $b$, the edge must consume one stick of type $\{a,b\}$. Subtract all such forced copies, rejecting if any required multiplicity is unavailable. For each color $a$, let $h_a$ be the number of remaining edges whose center has color $a$; all their leaves are uncolored.

Every remaining stick must be placed on one of these edges, and a stick of type $\{a,b\}$ can be used on an edge centered at $a$ or at $b$. We solve this assignment without constructing the quadratic compatibility graph. Create a source, one node for every remaining stick type $\{a,b\}$ of positive multiplicity $q_{ab}$, one node for every color, and a sink. Add an arc from the source to the type node with capacity $q_{ab}$, arcs from the type node to color nodes $a$ and $b$ with capacity $q_{ab}$ (only one arc for a loop), and an arc from color node $a$ to the sink with capacity $h_a$. Since $\sum_a h_a$ equals the number of remaining sticks, a flow of this value assigns every stick copy to exactly one compatible center color and fills every quota $h_a$.

Such a flow gives a feasible puzzle solution by assigning the selected stick copies arbitrarily to the uncolored leaves adjacent to centers of the corresponding color. Conversely, every feasible solution induces exactly this flow. There are at most $m$ positive stick types, so the network has $O(m+c)$ arcs and polynomially bounded integral capacities. By the deterministic exact max-flow algorithm of van den Brand et al.~\cite{BrandEtAl2023Flow}, feasibility is decided in $(m+c)^{1+o(1)}$ time. \qed
\end{proof}

\subsection{A Sharp Threshold for Tiny Components}

We now classify completely uncolored host graphs whose connected components contain very few edges.

\begin{theorem}\label{thm:twoedge}
If the host graph is completely uncolored and every connected component contains at most two edges, then \FMMP{} is solvable in $O(m+c)$ time.
\end{theorem}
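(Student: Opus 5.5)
The plan is to reduce the problem to a counting statement about the color multigraph $H_S$ from Section~\ref{sec:paths-cycles}. Isolated host vertices are irrelevant, since they carry no sticks and can receive any color. Let $a$ be the number of $K_2$ components and $b$ the number of $P_3$ components, so $m=a+2b$.

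A $P_3$ component $x\,y\,z$ colored $c_x,c_y,c_z$ consumes two sticks $\{c_x,c_y\}$ and $\{c_y,c_z\}$ that share the color $c_y$. These are two edges of $H_S$ incident with a common vertex; parallel edges and loops are allowed, and a loop at $y$ together with any other edge at $y$ also qualifies. Conversely, any two edges of $H_S$ sharing a vertex $y$ can be realized on a $P_3$ by giving its middle vertex color $y$. A $K_2$ component accepts any single stick. By Observation~\ref{obs:hist}, the instance is therefore feasible if and only if $H_S$ contains $b$ edge-disjoint pairs of edges, each pair sharing an endpoint. The remaining $a$ sticks then go onto the $K_2$'s, and the counts match automatically.

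The key step is to show that the maximum number of such disjoint pairs equals $\sum_i\lfloor e_i/2\rfloor$, where $e_i$ is the number of edges in the $i$-th nontrivial connected component of $H_S$. The upper bound is trivial, because pairs never cross components. For the lower bound, fix a connected component. Assign each loop to its vertex, and assign each non-loop edge to one of its endpoints by an orientation (assigned endpoint $=$ head). Choose the orientation so that at most one vertex receives an odd number of assigned edges. Then, at every vertex, pair up the assigned edges arbitrarily; all of them contain that vertex, so every pair is valid, and at most one edge is left over. This yields $\lfloor e_i/2\rfloor$ pairs.

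To obtain the orientation, start from an arbitrary assignment and fix a spanning tree. Process the tree bottom-up: for each non-root vertex $v$ with an odd count, flip the tree edge from $v$ to its parent. This adjusts $v$'s parity without touching already processed descendants, so only the root can remain odd. Loops contribute one assignment each and do not interfere. I expect this parity-correcting step, and in particular the careful treatment of loops and multi-edges, to be the only real obstacle. The rest is bookkeeping.

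The algorithm accepts if and only if $b\le\sum_i\lfloor e_i/2\rfloor$. Computing the components of $H_S$ and their edge counts takes $O(m+c)$ time with one scan of the sticks and a BFS/DFS. The same construction yields an explicit coloring within this time bound, if one is desired.
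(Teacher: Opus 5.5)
Your proof is correct. It uses the same reduction as the paper: the instance is feasible if and only if $H_S$ contains as many edge-disjoint pairs of incident sticks as there are $P_3$ components, and the per-component maximum is $\lfloor q_i/2\rfloor$. The difference is in how you prove the lower bound $\lfloor q_i/2\rfloor$.

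The paper passes to the compatibility graph $L(C)$, whose vertices are the sticks of a component $C$ of $H_S$. It observes that $L(C)$ is connected and claw-free, then invokes Sumner's theorem, deleting a non-cut vertex when $q$ is odd.

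You instead orient the non-loop edges so that at most one color receives an odd number of assigned edges, using the bottom-up spanning-tree parity correction, and then pair edges locally at each color. This is essentially the classical proof that a connected graph with an even number of edges decomposes into paths of length two. Your handling of loops (fixed at their vertex and kept outside the tree) and of parallel edges is sound.

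The paper's route is shorter for a reader who accepts Sumner's theorem. However, it is purely existential and goes through an auxiliary graph that can have quadratic size, so its $O(m+c)$ bound covers only the numerical test. Your argument is self-contained and constructive. The parity correction runs in linear time, so, as you remark, it also produces an explicit feasible coloring within the same $O(m+c)$ bound.
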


\begin{proof}
Every nontrivial host component is either one edge $K_2$ or a two-edge path $P_3$. Let $r$ be the number of $P_3$ components. Recall the color multigraph $H_S$, whose edges are the sticks. A $P_3$ requires two sticks sharing the color assigned to its middle vertex, so we must choose $r$ disjoint pairs of incident edges of $H_S$.

Consider an edge-containing connected component $C$ of $H_S$ with $q$ sticks. Let $L(C)$ be the simple compatibility graph whose vertices are these sticks and whose edges join pairs sharing a color. This graph is connected and claw-free: the neighbors of a nonloop stick split according to its two endpoint colors, so among any three neighbors two are adjacent; for a loop stick all neighbors share its unique color. If $q$ is even, $L(C)$ has a perfect matching by Sumner's theorem~\cite{Sumner1974}. If $q$ is odd and $q>1$, delete a non-cut vertex of $L(C)$; the remaining graph is connected, claw-free, and of even order, so it has a perfect matching. Thus $C$ contains exactly $\lfloor q/2\rfloor$ disjoint compatible stick pairs. The case $q=1$ is immediate.

Therefore, if the edge-containing components of $H_S$ have $q_1,\ldots,q_t$ edges, the maximum number of disjoint compatible stick pairs is $\sum_{i=1}^t\lfloor q_i/2\rfloor$. The Matching-Match instance is feasible exactly when this quantity is at least $r$: use one pair for every $P_3$ and put every remaining stick independently on a $K_2$. The components of $H_S$ and their edge counts are found by one graph traversal, in $O(m+c)$ time. \qed
\end{proof}

Allowing one additional edge per component makes the problem hard.

\subsection{Hardness on Uncolored Linear Forests}

\begin{theorem}\label{thm:p4}
\FMMP{} is NP-complete on completely uncolored linear forests in which every connected component is exactly a copy of $P_4$.
\end{theorem}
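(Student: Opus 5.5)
We would follow the template of Theorem~\ref{thm:h-transfer}, with $H=P_4$, but the diameter hypothesis fails: $\operatorname{diam}(P_4)=3$. So the injectivity argument has to be redone by hand. Membership in NP is immediate. For hardness, start from an instance $F=(V,E)$ of \textsc{$P_4$-Decomposition}, which is NP-complete by Dor and Tarsi~\cite{DorTarsi1997} because $P_4$ is connected with three edges. Reject unless $3\mid |E|$ and put $t=|E|/3$. Introduce one color $c_v$ per source vertex and one stick $\{c_u,c_v\}$ per source edge. The host graph is $tP_4$, completely uncolored. The forward direction is as before: color each host path by the corresponding source path of the decomposition.

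For the converse, fix one host component $x_1x_2x_3x_4$ with colors $c_{w_1},\dots,c_{w_4}$. Its three sticks are distinct source edges $w_1w_2,w_2w_3,w_3w_4$, so $w_1w_2w_3w_4$ is a trail of length three in the simple graph $F$. There are no loop sticks, so consecutive colors differ. If $w_1=w_3$, the edges $x_1x_2$ and $x_2x_3$ would both need the unique stick $\{c_{w_1},c_{w_2}\}$; symmetrically $w_2\ne w_4$. The only remaining coincidence is $w_1=w_4$. In that case the component consumes the three edges of a triangle of $F$ rather than a copy of $P_4$. Hence a feasible coloring yields a decomposition of $E(F)$ into paths $P_4$ and triangles, that is, into trails of length three.

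The main obstacle is therefore to exclude triangles. We first try to reduce from \textsc{$P_4$-Decomposition} restricted to triangle-free source graphs, ideally bipartite ones. For such $F$, every trail of length three is a genuine $P_4$, and the converse follows exactly as in Theorem~\ref{thm:h-transfer}: the components use disjoint sticks and together use all of them. To justify NP-completeness of the restricted source problem, we would either cite a known result for path decompositions of bipartite graphs, or inspect the Dor--Tarsi gadgets for $P_4$ and check that they can be chosen triangle-free.

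If neither route works, the fallback is to reduce instead from a problem whose yes-certificates are decompositions into trails of length three with no closed trails allowed. Concretely, we would take any source graph $F$ and subdivide every edge twice, which makes it triangle-free. We would then show that decompositions into length-three trails of the subdivided graph correspond to solutions of a known NP-complete partition problem, for example via the standard edge-to-$P_4$ correspondence used in $P_k$-decomposition hardness proofs. Either way, once the source graph is triangle-free, the remaining counting and bijection steps are routine.
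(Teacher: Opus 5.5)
Your main route is the paper's proof. It uses the same one-color-per-vertex, one-stick-per-edge construction on $tP_4$, and the same exclusion of equal consecutive colors and of $w_1=w_3$, $w_2=w_4$. Triangles are ruled out by taking the source graph bipartite, and the paper closes your open citation with the NP-completeness of $P_4$-decomposition on bipartite graphs due to Teypaz and Rapine~\cite{TeypazRapine2008}.

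Discard the fallback, though. Subdividing every edge twice makes every graph trivially decomposable into copies of $P_4$, one per original edge, so that construction cannot carry any hardness.
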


\begin{proof}
Membership in NP is immediate. We reduce from \textsc{$P_4$-Decomposition}, which asks whether the edge set of a graph can be partitioned into copies of the three-edge path $P_4$ and remains NP-complete on bipartite graphs~\cite{TeypazRapine2008}. Let $H=(V,E)$ be a simple bipartite instance. If $|E|$ is not divisible by three, output a fixed NO-instance; otherwise write $|E|=3t$. Introduce one color $c_v$ for every source vertex $v$ and one stick $\{c_u,c_v\}$ for every source edge $uv$. Let the completely uncolored host graph be the disjoint union of $t$ copies of $P_4$.

A $P_4$-decomposition of $H$ immediately gives a feasible solution by copying the four source-vertex colors of each decomposition path onto one host component.

Conversely, consider one host path, with source colors $c_{v_0},c_{v_1},c_{v_2},c_{v_3}$ in path order. Its three sticks represent the source edge between $v_0$ and $v_1$, the source edge between $v_1$ and $v_2$, and the source edge between $v_2$ and $v_3$. Adjacent equalities are impossible because the source graph has no loops. If $v_0=v_2$, the first two host edges would require two copies of the same source edge, but only one corresponding stick exists; similarly $v_1\ne v_3$. Finally, if $v_0=v_3$, the three source edges would form a triangle, impossible because $H$ is bipartite. Hence all four colors are distinct and the three source edges form a genuine $P_4$. Since every stick is used exactly once, the host components yield a $P_4$-decomposition of $H$. \qed
\end{proof}

\begin{corollary}[Exact maximum-degree threshold]\label{cor:maxdeg}
For completely uncolored, possibly disconnected host graphs with an unrestricted number of colors, maximum degree at most $1$ is polynomial-time solvable, while NP-completeness already holds at maximum degree $2$.
\end{corollary}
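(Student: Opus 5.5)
The corollary is assembled from two results already proved, with one small observation for the positive side.

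\emph{Positive side.} Suppose the host graph $G$ is completely uncolored and $\Delta(G)\le1$. Then every connected component is an isolated vertex or a single edge $K_2$. In particular, every component has at most two edges, so Theorem~\ref{thm:twoedge} applies directly and gives an $O(m+c)$ algorithm. In this degenerate case the algorithm is even simpler: there are no $P_3$ components, so $r=0$. The test $\sum_i\lfloor q_i/2\rfloor\ge r$ therefore always holds, and every instance with exactly $m$ sticks is feasible. One can also see this directly: colour the two endpoints of each $K_2$ by the two colours of a distinct stick; isolated vertices are unconstrained. Only a count check is needed, which is automatic by the definition of an instance, so there is nothing nontrivial to prove.

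\emph{Negative side.} Theorem~\ref{thm:p4} shows that \FMMP{} is NP-complete on completely uncolored disjoint unions of copies of $P_4$. Every vertex of $P_4$ has degree at most $2$, and the two inner vertices have degree exactly $2$. Hence these host graphs have maximum degree exactly $2$, and hardness holds at $\Delta(G)=2$. Membership in NP for all host graphs is immediate, as in the earlier theorems.

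\emph{Scope.} The only point to check is that the hypotheses match. Theorem~\ref{thm:p4} uses an unbounded number of colors, one per source vertex. This is consistent with the phrase ``unrestricted number of colors'' in the corollary, and it is also why the threshold is stated for that regime. No step here is a real obstacle; the corollary is a direct combination of Theorem~\ref{thm:twoedge} and Theorem~\ref{thm:p4}.
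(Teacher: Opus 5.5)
Your proposal is correct and follows the paper's proof. The positive side is the observation that every nontrivial component is a single edge, so each stick can be placed independently; your detour through Theorem~\ref{thm:twoedge} is harmless but unnecessary. The negative side is Theorem~\ref{thm:p4}, whose $P_4$-forest hosts have maximum degree exactly $2$.
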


\begin{proof}
At maximum degree at most one every nontrivial component is an isolated edge, so every stick can be placed independently. Theorem~\ref{thm:p4} gives NP-completeness at maximum degree two. \qed
\end{proof}

\begin{corollary}[Exact edge-per-component threshold]
For completely uncolored host graphs, components with at most two edges yield a polynomial-time problem, whereas NP-completeness already occurs when every component has exactly three edges.
\end{corollary}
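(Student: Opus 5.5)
The plan is to combine the two results immediately preceding the statement, and check that they match the two sides of the threshold exactly.

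\textbf{Polynomial side.} Let the host graph be completely uncolored and let every connected component have at most two edges. Then Theorem~\ref{thm:twoedge} applies directly. Each nontrivial component is $K_2$ or $P_3$. Feasibility reduces to whether the color multigraph $H_S$ contains at least $r$ disjoint compatible stick pairs, where $r$ is the number of $P_3$ components. The claw-free matching argument computes this maximum as $\sum_i\lfloor q_i/2\rfloor$, so the test runs in $O(m+c)$ time. Isolated vertices carry no sticks and can be ignored. Hence components with at most two edges give a polynomial-time problem, with no further work needed.

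\textbf{Hardness side.} Theorem~\ref{thm:p4} shows NP-completeness when every component is exactly a copy of $P_4$, which has three edges. That is exactly the claimed regime ``every component has exactly three edges.''

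As a sanity check, Theorem~\ref{thm:h-transfer} with $H=K_{1,3}$ or $H=K_3$ also gives hardness for three-edge components. Both graphs are connected, have diameter at most two, and have three edges. So even the other connected three-edge graphs are hard, and the threshold does not depend on choosing $P_4$. This only reinforces the statement and is not required.

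\textbf{Main point to verify.} The only real check is that the hypotheses line up. The hard instances in Theorem~\ref{thm:p4} are completely uncolored, matching the tractable setting, and the color set is unrestricted in both. Membership in NP is inherited from Theorem~\ref{thm:p4}. With these confirmed, the corollary follows immediately from the two theorems.
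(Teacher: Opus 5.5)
Your proposal is correct and matches the paper's intended argument. The corollary is stated without a separate proof, as the direct combination of Theorem~\ref{thm:twoedge} for the tractable side and Theorem~\ref{thm:p4} (disjoint unions of $P_4$'s, each with exactly three edges) for the hard side. Your side remark that Theorem~\ref{thm:h-transfer} with $H=K_{1,3}$ or $H=K_3$ also gives hardness is valid but, as you note, not needed.
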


The hard host graphs of Theorem~\ref{thm:p4} are planar and bipartite, have maximum degree two and treewidth one, and every component has four vertices. Together with Theorem~\ref{thm:euler}, this shows that connectivity itself can change the complexity: one arbitrary uncolored path is easy, but a disjoint union of constant-size uncolored paths is NP-complete because the common stick pool creates a global partitioning constraint.

\section{Short Spiders with Unrestrictedly Many Colors}\label{sec:short-spiders}

Dumitru, Micl\u au\c s, and Popa prove that arbitrary precoloring on spiders whose legs have length at most two is fixed-parameter tractable in the number of colors~\cite{DumitruMiclausPopaSpiders}. Here we identify two precoloring patterns that remain polynomial even when the number of colors is part of the input.

\subsection{Precolored Internal Vertices}

\begin{theorem}\label{thm:length2-middle}
Let every spider leg have length at most two, and suppose every internal vertex at distance one from the body on a length-two leg is precolored. Leaves and the body may be precolored or uncolored. If the host spider has $m$ edges, then \FMMP{} is solvable in $c(m+c)^{1+o(1)}$ time.
\end{theorem}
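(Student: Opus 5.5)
We enumerate the body color. The running time $c(m+c)^{1+o(1)}$ suggests trying each of the at most $c$ colors $a$ allowed by the body precolor and, for each, solving one max-flow instance with $O(m+c)$ arcs, as in Theorem~\ref{thm:star-forest}. So fix a body color $a$. Then every edge from the body to a precolored internal vertex $w$ of a length-two leg has both endpoint colors fixed. It must consume one stick $\{a,\mathcal C_0(w)\}$; we subtract these forced copies and reject if any multiplicity becomes negative. Similarly, every length-two leg whose leaf is also precolored forces its outer stick, and every length-one leg with a precolored leaf forces $\{a,b\}$. We subtract all such sticks.

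After these subtractions, two kinds of edges remain. First, there are length-one legs with uncolored leaves; each needs a stick containing $a$, so together they form a quota $h_a'$ for color $a$. Second, there are outer edges of length-two legs from a precolored internal vertex $w$ to an uncolored leaf; each needs a stick containing $\mathcal C_0(w)$. For each color $b$, let $h_b$ be the number of outer edges with internal color $b$, and add $h_a'$ to $h_a$. This is exactly the star-forest situation: every remaining edge has one endpoint of known color (the body or a precolored internal vertex) and an uncolored leaf. The number of remaining sticks must equal $\sum_b h_b$, otherwise we reject. We then build the same flow network as in Theorem~\ref{thm:star-forest}: source to each positive stick type $\{x,y\}$ with capacity $q_{xy}$, type node to color nodes $x$ and $y$, and color node $b$ to sink with capacity $h_b$. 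We check for a flow saturating all sink arcs.

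Correctness is argued in both directions. Given a flow, we place the selected sticks on the uncolored-leaf edges at centers of the corresponding color and color each leaf by the other endpoint of its stick. Leaves are distinct vertices used by exactly one edge, so there are no conflicts, and Observation~\ref{obs:hist} gives feasibility. Conversely, a feasible coloring with body color $a$ induces this flow after the forced subtractions. Each network has $O(m+c)$ nodes and arcs, since there are at most $m$ positive types and capacities are bounded by $m$. The max-flow algorithm of van den Brand et al.~\cite{BrandEtAl2023Flow} then gives $(m+c)^{1+o(1)}$ time per candidate, hence $c(m+c)^{1+o(1)}$ in total. The subtractions are linear per candidate if we store stick multiplicities in a hash or array indexed by the $O(m)$ occurring types.

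The main obstacle is making sure no coupling survives. The body's color affects every body edge simultaneously, which is why we guess it and do not route it through the flow. Once $a$ is fixed, every remaining edge has exactly one free endpoint, and that endpoint is a degree-one leaf. We must check the degenerate cases: a body of degree at most two given as part of the input, legs of length zero, and loop types $\{b,b\}$. These are handled by the single-arc convention used in Theorem~\ref{thm:star-forest}.
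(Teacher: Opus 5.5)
Your proposal is correct and follows essentially the same route as the paper. You enumerate the at most $c$ body colors, subtract the sticks forced on body-to-internal edges and on edges with precolored leaves, and decide the residual star-forest-type assignment with the compressed source--type--color--sink flow, solved by the almost-linear max-flow algorithm. Your explicit check that the residual stick count equals $\sum_b h_b$ is automatically satisfied, since each forced edge removes exactly one stick, but it is harmless.
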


\begin{proof}
We try every possible final color $s$ of the body that is compatible with its precolor; there are at most $c$ choices. Fix one such $s$.

Consider a length-two leg $b-u_i-v_i$, where $u_i$ is precolored $a_i$. The body edge $bu_i$ must consume one stick of type $\{s,a_i\}$. Subtract these forced copies for all length-two legs, rejecting if any multiplicity becomes negative. Every remaining host edge now has one endpoint of fixed color: it is $a_i$ on an outer edge $u_iv_i$ and $s$ on a length-one edge $bv$.

If the leaf of such an edge is precolored $b$, its stick type is forced to be $\{a,b\}$, where $a$ is the fixed color of the other endpoint. Subtract all these forced copies as well. For each color $a$, let $h_a$ be the number of residual edges whose fixed endpoint has color $a$; their other endpoints are uncolored leaves.

We again use a compressed flow. For every remaining stick type $\{a,b\}$ of positive multiplicity $q_{ab}$, create a type node with an incoming arc of capacity $q_{ab}$ from the source and outgoing arcs of capacity $q_{ab}$ to color nodes $a$ and $b$ (one arc for a loop). Give color node $a$ an arc to the sink of capacity $h_a$. A flow equal to the number of remaining sticks exists exactly when all residual edges can be supplied: each unit routed to color $a$ assigns one stick copy to an uncolored leaf edge whose fixed endpoint has color $a$.

The network has $O(m+c)$ arcs and polynomially bounded integral capacities, so one body color is tested in $(m+c)^{1+o(1)}$ time using the deterministic exact max-flow algorithm of van den Brand et al.~\cite{BrandEtAl2023Flow}. Trying at most $c$ body colors gives $c(m+c)^{1+o(1)}$ total time. \qed
\end{proof}

\subsection{Precolored Leaves}

\begin{theorem}\label{thm:length2-leaves}
Let every spider leg have length exactly two and every leaf be precolored. Internal vertices and the body may be precolored or uncolored. With $m$ host edges, \FMMP{} is solvable in $c(m+c)^{1+o(1)}$ time.
\end{theorem}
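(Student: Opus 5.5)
Fix a final body color $s$ compatible with the body precolor; there are at most $c$ candidates, and each is tested independently. The goal for each $s$ is a single compressed flow network, as in Theorem~\ref{thm:length2-middle}. After $s$ is fixed, each length-two leg $b-u_i-v_i$ has known outer colors $s$ and $\ell_i$, where $\ell_i$ is the precolor of the leaf $v_i$. Only the middle vertex $u_i$ is free, and it may itself be precolored.

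\textbf{Forced legs.} If $u_i$ is precolored $a_i$, the leg needs one stick $\{s,a_i\}$ and one stick $\{a_i,\ell_i\}$. Subtract all such forced copies, rejecting as soon as a multiplicity becomes negative. After this step, every remaining leg has an uncolored middle vertex. Group these legs by leaf color: let $h_\ell$ be the number of remaining legs whose leaf is precolored $\ell$.

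\textbf{Choosing middle colors.} Choosing color $x$ for the middle vertex of a leg with leaf color $\ell$ consumes one stick $\{s,x\}$ and one stick $\{x,\ell\}$. The two sticks coincide when $x=s$ and $\ell$ is such that the types agree, and the loop cases $x=s$ or $x=\ell$ need separate accounting. The residual multiset must be used exactly. Hence every residual stick either contains $s$, and is used as some body edge, or else its type is $\{x,\ell\}$ and it is used as an outer edge.

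\textbf{Flow network.} Let $r_x$ be the residual multiplicity of $\{s,x\}$. The leg-middle color counts $y_x$ must satisfy $y_x=r_x$ minus the number of copies of $\{s,x\}$ used as outer edges, and those copies arise only when $\ell=s$ or $x=s$. To handle this, route one unit of flow per leg. The source sends $h_\ell$ units to the leaf-color node $\ell$. For each stick type $\{x,\ell\}$ with positive residual multiplicity, leaf node $\ell$ sends to middle node $x$ with that capacity. Middle node $x$ sends to the sink with capacity equal to the residual multiplicity of $\{s,x\}$. The requirement is that all sticks are saturated: the number of legs equals the number of body sticks, and the outer sticks must be exhausted as well. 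Because the total number of residual sticks equals twice the number of legs, saturating the source arcs together with conservation forces every stick to be used. The network has $O(m+c)$ arcs, since only positive stick types create arcs, so the max-flow algorithm of van den Brand et al.~\cite{BrandEtAl2023Flow} runs in $(m+c)^{1+o(1)}$ time per value of $s$.

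\textbf{Main obstacle.} The hard step is the coupling when a stick type contains $s$. A stick $\{s,\ell\}$ may serve either as a body edge (middle colored $\ell$) or as an outer edge (middle colored $s$ with leaf $\ell$). A stick $\{s,s\}$ may serve both roles at once in a leg whose middle and leaf are both $s$. The plan is to make every type containing $s$ a shared node, with one source-side capacity equal to its multiplicity, from which flow goes either to the body sink or to the outer-edge role. Correctness then requires checking that an integral flow decomposes into leg assignments. The leg $(\ell,x)$ takes one unit from the path through $\{x,\ell\}$ and one body unit $\{s,x\}$, and the counts must match per $x$. Per-$x$ conservation gives this matching, because the flow into middle node $x$ equals the number of legs with middle $x$, and that flow must also equal the body usage. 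I would enforce this by giving middle node $x$ an exact-capacity demand equal to the residual count of $\{s,x\}$, reduced by the outer usage of type $\{s,x\}$, with that outer usage itself routed as flow. Verifying that this lower-bound construction reduces to an ordinary max-flow is where the care is needed.
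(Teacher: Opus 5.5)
Your outer framework (trying each body color $s$, subtracting forced legs) matches the paper. The flow network you propose, however, is incorrect, not merely incompletely justified.

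An unordered type $\{x,\ell\}$ with $x\ne\ell$ appears on two arcs of your network: leaf node $\ell$ to middle node $x$, and leaf node $x$ to middle node $\ell$. Each arc has full capacity $q_{x\ell}$, so nothing bounds the \emph{total} use of that type by $q_{x\ell}$. Your step ``since there are $2L$ residual sticks, saturating the source arcs forces every stick to be used'' therefore fails. Total usage $2L$ implies exact usage only if no type is used more than its multiplicity.

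Concretely, take two remaining legs with leaves precolored $a$ and $b$, and residual sticks $\{s,a\},\{s,b\},\{a,b\},\{c,c\}$ with $s,a,b,c$ distinct. Routing leaf $a\to$ middle $b\to$ sink and leaf $b\to$ middle $a\to$ sink saturates the source, but it uses $\{a,b\}$ twice. The instance is infeasible: every body edge has endpoint color $s$ and every outer edge has endpoint color $a$ or $b$, so $\{c,c\}$ fits nowhere.

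Separately, the coupling of types containing $s$, which you correctly identify as the crux, is left as an unverified lower-bound construction. So the proposal does not yet contain an algorithm.

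The missing idea is that this coupling never has to be resolved by flow, because endpoint counting fixes the middle-color multiplicities in advance. Let $L$ be the number of remaining legs, $n_a$ the number of remaining leaves precolored $a$, and $D_a$ the number of endpoints of color $a$ among the residual sticks. The number $r_a$ of middle vertices colored $a$ must satisfy $D_a=2r_a+n_a$ for $a\ne s$ and $D_s=L+2r_s+n_s$. Hence the body-edge multiset is forced to be exactly $r_a$ copies of $\{s,a\}$ for each $a$. Identical sticks are interchangeable, so you can reserve these copies (rejecting if unavailable). The question of whether an $s$-type stick serves as a body edge or an outer edge then disappears.

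The remaining $L$ sticks must be oriented from middle color to leaf color so that color $a$ has out-degree $r_a$ and in-degree $n_a$. This is a single-commodity flow:
\begin{itemize}
\item the source feeds one node per nonloop type, with capacity $q_{ab}$;
\item each type node has arcs to color nodes $a$ and $b$;
\item color node $a$ has an arc to the sink with capacity $r_a$ minus the number of loops at $a$.
\end{itemize}
The in-degrees then follow automatically from the endpoint identity. Feeding each type node from the source is exactly what enforces the joint per-type capacity that your network lacks.
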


\begin{proof}
Again try every body color $s$ compatible with the body precolor. Fix one choice. A leg whose internal vertex is already precolored has both of its edge types forced: if its internal color is $a$ and its leaf color is $b$, it consumes one stick $\{s,a\}$ and one stick $\{a,b\}$. Subtract these copies, rejecting if either multiplicity is unavailable. We are left with $L$ legs whose internal vertices are uncolored and whose leaves remain precolored.

For each color $a$, let $n_a$ be the number of remaining leaves precolored $a$, let $D_a$ be the number of endpoints of color $a$ among the remaining sticks, and let $r_a$ be the number of remaining internal vertices that will receive color $a$. Every such internal vertex has degree two, every remaining leaf has degree one, and the body is incident with all $L$ remaining body edges. Therefore endpoint counting gives $D_a=2r_a+n_a$ for $a\ne s$, while $D_s=L+2r_s+n_s$. These equations uniquely determine every $r_a$. Reject if some $r_a$ is negative or nonintegral or if $\sum_ar_a\ne L$.

If $r_a$ internal vertices have color $a$, then exactly $r_a$ remaining body edges have type $\{s,a\}$. Reserve that many sticks of each type $\{s,a\}$, with type $\{s,s\}$ treated in the same way. Reject if any required multiplicity is unavailable. After the reservation, exactly $L$ sticks remain, one for each outer edge.

It remains to decide which internal color is paired with each precolored leaf. Regard every remaining stick $\{a,b\}$ as an undirected edge of a multigraph on the color set. Placing this stick on an outer edge whose internal vertex has color $a$ and whose leaf has color $b$ corresponds to orienting it from $a$ to $b$. We therefore need an orientation in which color $a$ is the tail of exactly $r_a$ edges and the head of exactly $n_a$ edges. A loop $\{a,a\}$ necessarily contributes one to both quantities.

This orientation can be found by a compressed flow. First remove all loops; if there are $\ell_a$ loops at color $a$, set the remaining tail demand to $r'_a=r_a-\ell_a$ and reject if $r'_a<0$. For every nonloop type $\{a,b\}$ of positive multiplicity $q_{ab}$, create one type node. Add an arc from the source to this node of capacity $q_{ab}$ and arcs from it to color nodes $a$ and $b$, again of capacity $q_{ab}$. Finally, give color node $a$ an arc to the sink of capacity $r'_a$. A flow equal to the number of nonloop remaining sticks chooses the tail of every stick copy and realizes exactly the required outdegrees. The required indegrees then follow automatically from the endpoint identity: after loops are removed, the number of nonloop stick endpoints at color $a$ equals $r'_a+(n_a-\ell_a)$.

From a feasible orientation, assign the edges directed into color $b$ bijectively to the $n_b$ leaves precolored $b$; their tails give the colors of the corresponding internal vertices. Conversely, every feasible puzzle solution induces exactly such an orientation and flow.

There are at most $m$ positive remaining stick types, so for one body color the flow network has $O(m+c)$ arcs and vertices, and all capacities are at most $m$. By the deterministic almost-linear-time exact max-flow algorithm of van den Brand et al.~\cite{BrandEtAl2023Flow}, the flow is found in $(m+c)^{1+o(1)}$ time. Trying at most $c$ body colors gives $c(m+c)^{1+o(1)}$ total time. \qed
\end{proof}

\section{Conclusion}

We established structural tractability and hardness boundaries for Matching-Match on dense, sparse, connected, and disconnected host graphs. In particular, uncolored paths and cycles have linear-time feasibility tests but $\#P$-complete counting versions, while counting remains polynomial on stars and complete graphs. Dense host graphs exhibit sharp thresholds in complement degree and uniform multipartite part size, whereas disconnected sparse graphs become hard already under very small local structure.

Two questions remain especially natural. Is \FMMP{} fixed-parameter tractable or W[1]-hard when parameterized by the number $k$ of parts of a complete $k$-partite host graph? For spiders whose legs have length at most two, is \FMMP{} polynomial-time solvable or NP-complete under arbitrary precoloring when the number of colors is unrestricted?

\bibliographystyle{splncs04}
\bibliography{references}

\end{document}